\documentclass[runningheads]{llncs}
\usepackage{amssymb, latexsym, amsfonts, amsmath, lscape, amscd, epsfig, mathrsfs, tikz, enumerate}
\usepackage{array} 
\usepackage{multirow}
\usepackage[T1]{fontenc}

\usepackage[dvipsnames]{xcolor}

\usepackage{colortbl}

\usepackage{graphicx}

\usepackage{enumitem}

\usepackage{subcaption}
\usepackage{csquotes}
\usepackage[textwidth=\marginparwidth]{todonotes}
\usepackage{marginnote}
\usepackage{soul}
\usepackage{clrscode3e}
\usepackage[ruled,linesnumbered]{algorithm2e}
\usepackage{float}
\usepackage{cite}
\usepackage{apxproof}

\usepackage{thmtools}
\usepackage{thm-restate}

\usepackage{hyperref}
\usepackage{xcolor}
\hypersetup{
    colorlinks=true,
    linkcolor=blue,
    urlcolor=blue,
    citecolor=blue
}
\usepackage{cleveref}

\usepackage{mathtools}
\usepackage{pgf,tikz}
\usetikzlibrary{decorations.pathreplacing}
\usetikzlibrary{decorations.markings}
\usetikzlibrary{positioning, quotes, calc}
\usetikzlibrary{arrows}
\usetikzlibrary{backgrounds} 
\usepackage{tabularx, environ}
\usepackage{amsthm}

\spnewtheorem{observation}{Observation}{\bfseries}{\itshape}

\makeatletter

\newcolumntype{\expand}{}
\long\@namedef{NC@rewrite@\string\expand}{\expandafter\NC@find}

\NewEnviron{problembox}[2][]{%
  \def\problem@arg{#1}%
  \def\problem@framed{framed}%
  \def\problem@lined{lined}%
  \def\problem@doublelined{doublelined}%
  \ifx\problem@arg\@empty%
    \def\problem@hline{}%
  \else%
    \ifx\problem@arg\problem@doublelined%
      \def\problem@hline{\hline\hline}%
    \else%
      \def\problem@hline{\hline}%
    \fi%
  \fi%
  \ifx\problem@arg\problem@framed%
    \def\problem@tablelayout{|>{\bfseries}lX|c}%
    \def\problem@title{\multicolumn{2}{|%
      >{\raisebox{-\fboxsep}}%
      p{\dimexpr\textwidth-4\fboxsep-2\arrayrulewidth\relax}%
      |}{%
        \textsc{#2}%
      }}%
  \else
    \def\problem@tablelayout{>{\bfseries}lXc}%
    \def\problem@title{\multicolumn{2}{>%
      {\raisebox{-\fboxsep}}%
      p{\dimexpr\textwidth-4\fboxsep\relax}%
      }{%
        \textsc{#2}%
      }}%
  \fi%
  \bigskip\par\noindent%
  \begin{tabularx}{\textwidth}{\expand\problem@tablelayout}%
    \problem@hline%
    \problem@title\\[2\fboxsep]%
    \BODY\\\problem@hline%
  \end{tabularx}%
  \medskip\par%
}
\makeatother

\begin{document}

\title{On the Complexity of Hop Domination and 2-Step Domination in Graph Classes}

\titlerunning{Hop Domination and 2-Step Domination in Graph Classes}

\author{Sandip Das\inst{} \and
Sweta Das\inst{} \and
Sk Samim Islam\inst{}}

\authorrunning{S. Das, S. Das and S.S. Islam}

\institute{Indian Statistical Institute, Kolkata, India}




\maketitle
\section{Abstract}
The domination problem is a well-studied problem in graph theory. In this paper, we study two natural variants: the hop domination problem and the $2$-step domination problem. Let $G$ be a graph with vertex set $V$ and edge set $E$. For a graph $G$, a subset $S \subseteq V(G)$ is called an \emph{hop dominating set}  if every vertex not in $S$ lies at distance of exactly $2$ from at least one vertex in $S$. For $v\in V(G)$, let $N(v,2)$ denote the set of vertices in $V(G)$ that are at distance exactly $2$ from $v$. For a graph $G$, a subset $S \subseteq V(G)$ is called an \emph{$2$-step dominating set} if every vertex $v\in V(G)$ lies at a distance of exactly $2$ from at least one vertex in $S$. The \textsc{Hop Domination} (HD) problem and the \textsc{$2$-Step Domination} ($2$SD) problems ask whether a graph contains a hop domination set or a $2$-step domination set of size at most $k$, respectively. We study the computational complexity of these problems, and show that both are NP-complete, even when restricted to $d$-regular graphs for every $d\geq 3$, claw-free graphs and also unit disk graphs.

\section{Introduction}
In graph theory, a dominating set is a set of vertices such that every vertex in the
graph is either in the set or adjacent to at least one vertex of the set. Dominating sets model direct coverage problems and have applications in network design, surveillance, and resource allocation. For example, servers in communication networks, cameras in security systems, or facilities such as hospitals can be modeled as vertices of a dominating set to ensure every entity is immediately accessible.

Hop domination generalizes classical domination by requiring coverage at an exact multi-step distance,for example, at distance exactly two. This is useful in settings where interactions occur indirectly. In financial security, fraudulent activities often pass through intermediaries like money mules, which can be captured using hop domination. In computer networks, hop domination helps identify critical relay nodes so that all devices remain within controlled multi-step access. Thus, while domination captures immediate influence, hop domination is valuable for analyzing hierarchical or indirect interactions.
A related notion, called \emph{hop domination}, was introduced by 
Ayyaswamy et al.~\cite{natarajan2015hop}. 
A set $S \subseteq V(G)$ is a \emph{hop dominating set (HDS)} 
if every vertex in $V(G)\setminus S$ is at distance exactly two from some vertex in $S$. 
The minimum size of such a set is the \emph{hop domination number}, denoted $\gamma_h(G)$. 
A natural generalization, called \emph{$r$-hop domination} for $r \ge 2$, 
was later studied in~\cite{jalalvand2017complexity}.

For a graph $G$ and an integer $r \ge 2$, a set $S \subseteq V(G)$ 
is an \emph{$r$-hop dominating set} ($r$HDS) if every vertex in $V(G)\setminus S$ 
is at distance exactly $r$ from some vertex in $S$. 
The minimum cardinality of such a set is the 
\emph{$r$-hop domination number}, denoted $\gamma_{rh}(G)$. 
The $r$-Hop Domination problem asks whether $G$ 
admits an $r$-hop dominating set of size at most $k$.

The case $r=2$ corresponds to the classical \textsc{Hop Domination} problem. 
Henning et al.~\cite{henning20172} showed that \textsc{Hop Domination} 
is \textsc{NP}-complete for planar bipartite and planar chordal graphs. 
They further proved that computing a minimum hop dominating set 
in an $n$-vertex graph cannot be approximated within a factor of 
$(1-\varepsilon)\log n$ for any $\varepsilon>0$, unless $\textsc{P}=\textsc{NP}$, 
and provided a polynomial-time approximation algorithm with a ratio 
$1+\log\bigl(\Delta(\Delta-1)+1\bigr)$, where $\Delta$ is the maximum degree of $G$~\cite{henning2020algorithm}. 
Moreover, the problem is \textsc{APX}-complete for bipartite graphs of maximum degree $3$.

Karthika et al.~\cite{karthika2025polynomial} presented polynomial-time algorithms for the \textsc{Hop Domination} problem on interval graphs and biconvex bipartite graphs. They also investigated the parameterized complexity of the problem, proving that it is \textsc{W[1]}-hard when parameterized by the solution size, and subsequently strengthened this result to \textsc{W[2]}-hardness~\cite{karthika2025hop}. Farhadi et al.~\cite{jalalvand2017complexity} showed that, for every $r \geq 2$, the \textsc{$r$-Hop Domination} problem is \textsc{NP}-complete even on planar bipartite and planar chordal graphs.

For a graph $G$, a subset $S \subseteq V(G)$ is called a \textit{$2$-step dominating set} if every vertex $v \in V(G)$ is at distance exactly $2$ from at least one vertex in $S$. Henning et al.~\cite{henning20172} proved that the \textsc{$2$-Step Domination} problem is \textsc{NP}-complete for planar bipartite and chordal graphs. More recently, Das et al.~\cite{das2026parameterized} established that both the \textsc{$r$-Hop Domination} and \textsc{$r$-Step Domination} problems are \textsc{W[2]}-hard even on bipartite and chordal graphs.

\medskip
\noindent\textbf{Our Contributions.}
We prove the NP-completeness of the \textsc{Hop Domination} and \textsc{$2$-Step Domination} problems on several graph classes. First, by a reduction from the \textsc{Planar Vertex Cover} problem restricted to graphs of maximum degree~$3$, which is known to be NP-complete~\cite{garey1977rectilinear}, we establish the hardness of both problems on unit disk graphs.

\begin{restatable}[]{theorem}{HopTwoStepUnitDisk}\label{thm:hop-2step-unitdisk-npc}
\textsc{Hop Domination} and \textsc{$2$-Step Domination} problems are NP-complete on unit disk graphs.
\end{restatable}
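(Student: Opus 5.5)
Membership in NP is immediate for both problems: given a candidate set $S$ of size at most $k$, all pairwise distances are computed by BFS in polynomial time, and we check the distance-exactly-two condition for every vertex (for every vertex outside $S$ in the hop case). For hardness we reduce from \textsc{Planar Vertex Cover} on graphs of maximum degree $3$, as announced above. We use the standard fact, going back to Valiant's orthogonal grid embedding, that such a graph $H$ with $n$ vertices and $m$ edges can be drawn in polynomial time on an integer grid of polynomial size. Each vertex becomes a grid point, and each edge becomes an orthogonal path between its endpoints, with internally disjoint edge-paths. Scaling the grid by a suitable constant (say $8$ or $12$) lets us place disks of unit diameter along these paths so that two disks intersect only when they are consecutive on the same path or both touch the same vertex point. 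The result is a unit disk graph realizing a subdivision of $H$ (possibly with small vertex gadgets).

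The core of the construction is to subdivide each edge $uv$ of $H$ into a path whose length is chosen so that hop domination behaves like ordinary domination on an auxiliary graph. The key observation is that on a long induced path, distance-two domination splits into the two parity classes: vertices at even positions hop-dominate each other, and likewise for odd positions. We therefore subdivide each edge by a number of internal vertices congruent to a fixed residue modulo $6$, for example $6\ell_e+c$ for some small constant $c$. Within an edge-path of length $6\ell_e+c$, every valid solution must then spend roughly $2\ell_e$ vertices (one per three consecutive positions in each parity class). It can spend one fewer vertex exactly when the path's ends are hop-dominated from outside by a chosen original vertex at distance two. This is the analogue of the classical reduction showing that \textsc{Dominating Set} is hard via subdivisions: the extra saving on edge $uv$ is available iff $u$ or $v$ is "selected", where selecting means placing a solution vertex at (or next to) the vertex gadget. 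We then set $k = \tau + \sum_e 2\ell_e$ (with the constant adjusted), where $\tau$ is the vertex cover target. One direction is then routine: a vertex cover of size $\tau$ yields a hop dominating set of this size by taking the cover vertices plus the canonical pattern on each edge-path. For the other direction, we normalize a given solution: any solution vertex on an edge-path that helps dominate a branch vertex can be shifted onto the branch vertex without loss, so a counting argument shows that at most $\tau$ branch vertices are used and that they cover every edge.

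Because the vertex degrees in $H$ are at most $3$, the neighborhoods of branch vertices are small and embeddable with unit disks, and the distance-two interactions at a vertex only reach the first two vertices of each incident path. To make the same gadget work for \textsc{$2$-Step Domination}, we additionally need every selected vertex itself to be dominated at distance two. The parity pattern already guarantees this inside paths. At branch vertices we attach, if necessary, a short pendant path of length two that forces a fixed vertex into every solution; this is still embeddable as a unit disk. We use the same count $k$, shifted by the forced vertices.

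The main obstacle is the normalization and counting step in the backward direction. Distance-exactly-two domination lacks the monotonicity of ordinary domination: a vertex does not dominate its neighbors or itself, and both parity classes near a branch vertex interact across up to three incident paths. My first attempt would be to fix the residue $c$ by a case analysis on a single edge-path with specified boundary conditions: compute the minimum number of internal solution vertices when neither end, exactly one end, or both ends receive outside help at distance $\le 2$. The goal is to verify that these minima are $2\ell_e+1$, $2\ell_e$, $2\ell_e$, and then choose $c$ to make that pattern hold. Once this local lemma is in place, the global exchange argument follows by summing over edges.
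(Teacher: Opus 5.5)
Your argument rests on the per-edge lemma, which you leave unproved and with $c$ unspecified. More seriously, the lemma has the wrong interface, so summing it over edges cannot give the reduction for a plain subdivision.

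It helps to reformulate. $S$ is a hop dominating set of $G$ exactly when $S$ dominates the graph $G^{[2]}$ joining vertices at distance exactly two. It is a $2$-step dominating set exactly when it \emph{totally} dominates $G^{[2]}$. In a subdivision, the only vertices at distance two from a branch vertex $v$ are the second vertices of its incident paths. The first vertices around $v$ form a clique in $G^{[2]}$.

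Your boundary cases record only whether a path end \emph{receives} help from a selected branch vertex. They never charge a path for the help it must \emph{give} to an unselected branch vertex. A vertex cover $C$ leaves exactly such vertices, each with all neighbours in $C$. Two checks show the damage.
\begin{itemize}
\item For $c=2$ (paths of length $6\ell_e+3$), $G^{[2]}$ consists, for every ordered pair $(u,v)$ with $uv\in E(H)$, of a path from $u$ through $3\ell_e+1$ further vertices ending in the clique at $v$. Among these vertices at least $\ell_e$ must be chosen. At least $\ell_e+1$ must be chosen if the neighbour of $u$ is used, and that is the only way to hop-dominate an unselected $u$. Hence $\gamma_h(G)=n+2\sum_e\ell_e$ exactly (selecting all branch vertices is optimal), which is independent of $\tau(H)$.
\item For odd $c$, $G$ is bipartite and the two parity classes decouple; the odd class contains no branch vertex at all. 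In the even class, the behaviour ``one selected end suffices'' forces $c\equiv 3\pmod 6$. But then the unique cheapest pattern on an edge $uv$ with $u$ selected omits the vertex at distance two from $v$. So an uncovered $v$ is not hop-dominated, and your forward direction (cover plus canonical pattern) fails.
\end{itemize}
Your normalization is not free either: replacing a path vertex $x_2$ by the branch vertex $u$ can lose the hop-domination of $x_4$.

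The $2$-step part has further problems. Total domination of a path on $N$ vertices costs about $N/2$, not $N/3$. So reusing ``the same count $k$'' undercounts by roughly $\sum_e\ell_e$, and the residue analysis has period $4$ rather than $3$. Also, a pendant path $v\,a\,b$ attached at a branch vertex forces $v$ itself into every solution, since $v$ is the only vertex at distance two from $b$. That selects every branch vertex and erases the encoding.

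What is missing is a forcing device of the kind the paper uses instead of a bare subdivision:
\begin{itemize}
\item At each interior grid point it hangs a two-disk gadget $c_{d_3},c_{d_4}$, whose leaf $c_{d_4}$ has only $c_{d_1},c_{d_2}$ at distance two.
\item It attaches a pendant disk $p_w$ to each chain disk touching a vertex disk or a flanking disk, giving each vertex disk private distance-two partners.
\item The $2$-step version adds the boundary gadgets $B[v]$.
\end{itemize}
Its budgets, $k+\sum(4k_{uv}-1)$ and $k+\sum(8k_{uv}-1)$, accordingly use different per-length rates for the two problems. Without such gadgets, and a per-edge lemma that accounts for help given as well as help received, your proposal does not yield a reduction.
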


We then study these problems on regular graphs and obtain the following result.

\begin{restatable}[]{theorem}{HopTwoStepRegular}\label{thm:hop-2step-regular-npc}
\textsc{Hop Domination} and \textsc{$2$-Step Domination} problems are NP-complete on regular graphs.
\end{restatable}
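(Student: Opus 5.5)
Membership in NP is immediate for both problems: given $S$, all pairwise distances can be computed by BFS in polynomial time, so we can check that every vertex outside $S$ (resp.\ every vertex) has some vertex of $S$ at distance exactly $2$. For hardness we reduce from a problem already known to be NP-complete on regular graphs. The natural source is \textsc{Dominating Set} (or \textsc{Total Dominating Set}) on $d$-regular graphs, which is NP-complete for every fixed $d\ge 3$. Our aim is a construction $H$ from a $d$-regular $G$ in which being at distance exactly $2$ in $H$ simulates adjacency in $G$, and in which $H$ is again regular.

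The construction we would try first replaces every edge by a path. Let $G$ be $d$-regular and subdivide each edge $uv$ once by a new vertex $x_{uv}$. Then two original vertices are at distance exactly $2$ in the subdivision precisely when they are adjacent in $G$. This matches \emph{total} domination, since a vertex is not at distance $2$ from itself, and it makes the correspondence with $2$-step domination of the original vertices natural. The subdivision vertices have degree $2$, however, so the graph is not regular. We therefore replace each subdivision vertex by a small gadget $X_{uv}$ with three properties:
\begin{enumerate}[label=(\roman*)]
\item every vertex of $X_{uv}$ has degree $d$ in $H$;
\item $u$ and $v$ each attach to exactly one gadget vertex, and those two attachment vertices coincide, or are joined so that $d_H(u,v)=2$;
\item all gadget vertices are cheaply $2$-dominated by a bounded number of forced vertices, or by original vertices.
\end{enumerate}
A workable option is to take the attachment vertex $x_{uv}$ adjacent to $u$, to $v$, and to $d-2$ vertices of a $(d-1)$-regular-ish filler, for instance a copy of $K_{d,d}$ minus a perfect matching, suitably modified, so that all degrees become $d$. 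Distances between original vertices then remain $\ge 2$, with equality exactly for edges of $G$.

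The next step is to prove the size correspondence: $G$ has a total dominating set (for $2$SD), respectively a dominating set (for HD), of size $k$ if and only if $H$ has a $2$-step (respectively hop) dominating set of size $k+c\,|E(G)|$, where $c$ is the number of gadget vertices each gadget must contribute. For the forward direction we add to the image of the $G$-solution a fixed set of $c$ vertices in each gadget, and check that these cover every gadget vertex and nothing else is needed. For the backward direction we show by a normalization argument that any solution can be transformed, without increasing its size, into one using exactly the canonical $c$ vertices per gadget plus original vertices. Any original vertex that is $2$-dominated only via gadget vertices can then be handled by swapping the offending gadget vertex for an endpoint.

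The main obstacle is designing the gadget so that this exchange argument works. Gadget vertices near $x_{uv}$ are at distance $2$ from $u$ and $v$, so solution vertices in gadgets can dominate original vertices in unintended ways. We must ensure that each such vertex dominates at most the original vertices that a single original endpoint would, so that the replacement is valid, while the gadget's internal lower bound $c$ stays tight. If single subdivision gadgets cannot meet both requirements, the fallback is to attach the regularizing structure to the original vertices instead: take several disjoint copies of $G$ and connect them through the subdivision vertices, which raises those vertices' degree to $d$ automatically. The counting then uses $k$ times the number of copies.
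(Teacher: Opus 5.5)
\textbf{There is a genuine gap.} Your NP-membership argument is fine, but the hardness part is only a plan. The gadget $X_{uv}$ is never specified, and the exchange argument you yourself flag as the main obstacle is left open.

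\textbf{The $2$-step half cannot be completed.} With \emph{any} gadget satisfying your (i)--(iii), total domination is the wrong source problem.
If $u$ and $v$ attached to distinct gadget vertices, we would get $d_H(u,v)\ge 3$. So they must share the attachment vertex $x_{uv}$, and regularity with $d\ge 3$ gives $x_{uv}$ at least one further neighbour $y$ inside the gadget. Such a $y$ is at distance exactly $2$ from both $u$ and $v$. By contrast, an original vertex is at distance $2$ from both ends of $uv$ only if it is a common neighbour of $u$ and $v$; in particular $u$ does not $2$-step dominate itself. Hence your requirement that a gadget vertex ``dominates at most the original vertices that a single original endpoint would'' cannot be met.
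Your forward direction needs the canonical sets to dominate every gadget on their own, since a total dominating set may contain neither end of an edge. So the canonical sets plus one such $y$ are feasible, and $y$ acts as an ``edge token'' of cost $1$ covering both endpoints. This breaks the claimed equivalence. Take the $3$-regular cube $Q_3$, with vertices labelled by binary strings. We have $\gamma_t(Q_3)=4$: among any three vertices, two lie at distance $2$ and share two neighbours, so three open neighbourhoods cover at most $7$ vertices. Yet $000$ and $001$ $2$-step dominate every original vertex except the adjacent pair $110,111$, and a single $y$ at $x_{110,111}$ covers that pair. So $H$ has a $2$-step dominating set of size $3+c|E(G)|$, although $G$ has no total dominating set of size $3$.

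\textbf{The hop half is incomplete.} Replacing a $y$ by $u$ is at least sound on the original vertices: a vertex of $S$ need not be dominated, and $u$ hop dominates $v$. You would still have to prove three things, none of which is done:
\begin{itemize}
\item every gadget costs $c$ even when $u,v\in S$, since they hop dominate the neighbours of $x_{uv}$;
\item deleting $y$ never leaves a gadget vertex undominated;
\item subdivision vertices put into $S$, each of which hop dominates all subdivision vertices sharing an endpoint with it, cannot replace forced vertices of several gadgets at once.
\end{itemize}
Your fallback with glued copies makes things worse. A copy of $x_{uv}$ joined to the other copies is at distance $2$ from $u$ and $v$ in every other copy, so the count ``$k$ times the number of copies'' has no justification.

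\textbf{How the paper avoids this.} The paper reduces from \textsc{Vertex Cover} on $d$-regular graphs. Each edge $v_iv_j$ is subdivided by $u_{ij}$, a regularizing gadget is hung on $u_{ij}$, and the budget is $k$ plus a fixed amount per edge: $k+6m$ and $k+3m$ for $d=3$, and $k+m$ and $k+2m$ for $d\ge 4$. The only constraint to simulate is that certain vertices near $u_{ij}$ (such as $a_{ij}$ or $u^1_{ij}$) be dominated, which cheaply only $u_i$ or $u_j$ can do. Domination of the original vertices is never the bottleneck. A vertex lying at distance $2$ from both $u_i$ and $u_j$ is simply exchanged for one endpoint, which is all a vertex cover requires. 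That is exactly the exchange your total-domination correspondence cannot perform.
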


Finally, we show that the hardness persists even for claw-free graphs.

\begin{restatable}[]{theorem}{HopTwoStepClawFree}\label{thm:hop-2step-clawfree-npc}
\textsc{Hop Domination} and \textsc{$2$-Step Domination} problems are NP-complete on claw-free graphs.
\end{restatable}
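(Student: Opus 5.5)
Membership in NP is immediate for both problems: given $S$, we compute all pairwise distances by BFS and check that every vertex outside $S$ (for HD), or every vertex (for $2$SD), has a vertex of $S$ at distance exactly $2$. For hardness I would reduce from \textsc{Dominating Set} on general graphs. The aim is a claw-free host graph $H$ in which ``distance exactly $2$'' between designated vertices reproduces closed-neighbourhood adjacency in the input graph $G$.

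The construction I would try is a clique-substitution of a subdivision of $G$.
\begin{itemize}
\item For every $v \in V(G)$ create a clique $C_v$. It contains one port vertex $a_{v,e}$ for every edge $e$ incident to $v$, together with a constant number of extra private vertices.
\item For every edge $e = uv$ create a middle vertex $m_e$ adjacent exactly to $a_{u,e}$ and $a_{v,e}$.
\end{itemize}
$H$ is claw-free. The neighbourhood of a port $a_{v,e}$ is the clique $C_v \setminus \{a_{v,e}\}$ plus the single vertex $m_e$, so any three neighbours include two from the clique. Middle vertices have degree $2$, and private vertices have only clique neighbours. (Equivalently, $H$ is essentially the line graph of a subdivision of $G$ with pendant edges attached, which also gives claw-freeness directly.) In $H$, a vertex of $C_u$ reaches $m_e$ for $e \ni u$ at distance at most $2$, and reaches ports of $C_v$ for $uv \in E(G)$ at distance $2$ through $m_{uv}$. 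I would tune the private vertices, for instance by attaching to each $C_v$ a short pendant path, or a pendant triangle that is itself claw-free, so that the following two properties hold.
\begin{itemize}
\item The private vertices of $C_v$ can only be hop-dominated (resp.\ $2$-step-dominated) from inside a fixed constant-size region around $C_v$. This forces a baseline of $c\,|V(G)|$ solution vertices.
\item Adding one more vertex in the region of $v$ covers exactly the middle vertices and ports associated with $N_G[v]$.
\end{itemize}

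The claim to prove is then that $G$ has a dominating set of size $k$ if and only if $H$ has a hop (resp.\ $2$-step) dominating set of size $c\,|V(G)| + k$. The forward direction is direct: take the baseline set plus one ``activator'' vertex for each vertex of the dominating set, and verify the distances vertex class by vertex class. For the backward direction I would take an optimal solution and normalise it by exchange arguments. Any solution vertex that lies on a middle vertex or on a port is replaced by the activator of an adjacent clique without uncovering anything. After normalisation, the activators present must form a dominating set of $G$, since otherwise some $m_e$ or port would be uncovered. For $2$SD the same gadget should work, but the solution vertices themselves must also be covered. I would handle this by making the baseline gadget self-covering, for example two baseline vertices at mutual distance $2$, and by checking that activators are covered by the baseline.

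The main obstacle I expect is the exact-distance-$2$ bookkeeping inside the cliques. Any two vertices of $C_v$ are at distance $1$, not $2$, so a solution vertex in $C_v$ does not cover its own clique-mates. Conversely, a vertex of $C_u$ might accidentally cover ports of a clique $C_w$ with $uw \notin E(G)$ via a two-step path through some $C_v$, which would break the correspondence. If this leakage happens, I would first try lengthening each edge gadget to a path of two or three middle vertices. That keeps claw-freeness and makes the only distance-$2$ contacts the intended ones. The cost is adjusting the baseline count, which becomes $c\,|V(G)| + c'\,|E(G)|$, and redoing the exchange argument with this count.
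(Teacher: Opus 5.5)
\textbf{The reduction from \textsc{Dominating Set} does not work.} In the host graph you describe, distance exactly $2$ encodes \emph{edge coverage}, not closed neighbourhoods.

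For $e=uv$ you get $N_H(m_e,2)=(C_u\setminus\{a_{u,e}\})\cup(C_v\setminus\{a_{v,e}\})$; pendant gadgets on private vertices are at distance at least $3$ from $m_e$. So $m_e$ can be hop-dominated only from $C_u$ or $C_v$. That is exactly the vertex-cover constraint on the edge $uv$.

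Conversely, take a vertex of $C_v$ or of a pendant gadget on $C_v$. It is at distance exactly $2$ from a vertex of another clique $C_u$ only if it is the port $a_{v,uv}$, and then only from $a_{u,uv}$. Your sentence that a vertex of $C_u$ ``reaches ports of $C_v$'' is true only for this one pair. A middle vertex $m_{uv}$ reaches only $C_u$ and $C_v$. Hence no single activator at $v$ can cover representatives of several neighbours, and the second property you hope to obtain by tuning private vertices cannot be achieved.

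The forward direction fails concretely. Take $G=K_3$ and the dominating set $\{v_1\}$. Then $m_{v_2v_3}$ is hop-dominated by nothing, unless baseline vertices already lie in $C_{v_2}\cup C_{v_3}$. But if the baseline covers the middle vertices, they no longer constrain the solution at all.

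The ``leakage'' you worry about cannot occur: cliques of non-adjacent vertices are at distance at least $5$. The actual problem is the opposite one, and lengthening the edge paths only makes it worse.

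\textbf{How the paper's proof relates, and what is still missing.} Your skeleton is essentially the paper's graph, but the paper reduces from \textsc{Vertex Cover}, which is what this structure naturally encodes. It turns $N[u_i]$ into a clique; this is your $C_{v_i}$, with $u_i$ as a private vertex and the $a_{ij},c_{ij}$ as ports. It subdivides each edge as $u_ia_{ij}b_{ij}c_{ij}u_j$, with $b_{ij}$ playing the role of $m_e$, which must be hop-dominated by $u_i$ or $u_j$. So switching the source problem is the missing idea.

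After that, the real work is a concrete pendant gadget at the middle vertex, which your proposal leaves unspecified. It must force a fixed number of solution vertices, none of which can hop-dominate (resp.\ $2$-step dominate) the middle vertex, and it needs a carefully checked exchange argument. This part is delicate, and you should not copy the paper's gadgets blindly:
\begin{itemize}
\item \emph{Hop gadget.} $b^3_{ij}\in S$ hop-dominates $b_{ij}$ through $b^2_{ij}$. For $G_1=K_3$, the set $\{u_3,b^3_{12},b^4_{12},b^1_{13},b^2_{13},b^1_{23},b^2_{23}\}$ is a hop dominating set of size $7=1+2\cdot 3$, although $K_3$ has no vertex cover of size $1$.
\item \emph{$2$-step construction.} The middle vertex $b_{ij}$ has the pairwise non-adjacent neighbours $a_{ij},c_{ij},b^1_{ij}$, so it is the centre of a claw.
\end{itemize}
Whatever gadget you choose, verify claw-freeness at the middle vertex, and rule out gadget vertices covering it, before doing the count.
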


\medskip
\noindent\textbf{Organisation:} In Section \ref{sec:prelim}, we recall some definitions and notations.  In Section \ref{sec:unitdiskgraph}, we show that \textsc{Hop Domination} and \textsc{$2$-Step Domination} problems are NP-complete on unit disk graphs. In Sections \ref{sec:3regular} and \ref{sec:dregular}, we show the NP-completeness of the problems for regular graphs. In Section \ref{sec:clawfree}, we show that the problems remain NP-complete for claw-free graphs also. We conclude in Section \ref{sec:conclusion}.

\section{Preliminaries}\label{sec:prelim}

In this paper, we consider finite, simple, and undirected graphs. 
For a graph $G=(V,E)$ and a vertex $v\in V(G)$, the \emph{open neighborhood} of $v$ is denoted by 
$N_G(v)=\{u\in V(G)\mid uv\in E(G)\}$, and the \emph{closed neighborhood} of $v$ is 
$N_G[v]=N_G(v)\cup \{v\}$. The degree of a vertex $v$ is denoted by $d_G(v)$. 
For two vertices $u,v\in V(G)$, the distance between $u$ and $v$ in $G$ is denoted by $d_G(u,v)$.

A graph $G$ is called a \emph{unit disk graph} if there exists a representation of the vertices of $G$ as points in the Euclidean plane such that two vertices are adjacent in $G$ if and only if the Euclidean distance between the corresponding points is at most $1$.

A graph $G$ is called \emph{$r$-regular} if every vertex of $G$ has degree exactly $r$. A graph is said to be \emph{regular} if it is $r$-regular for some positive integer $r$.

The graph $K_{1,3}$ is called a \emph{claw}. A graph $G$ is said to be \emph{claw-free} if $G$ does not contain an induced subgraph isomorphic to $K_{1,3}$.

 \section{Unit Disk Graphs}\label{sec:unitdiskgraph}

In this section, we prove that \textsc{Hop Domination} problem and \textsc{$2$-Step Domination} problem are NP-complete in unit disk graphs.

\subsection{Hop Domination}

We sketch a polynomial time transformation to the given problem from 
\textsc{Planar Vertex Cover} restricted to graphs of maximum degree~$3$, 
which is shown to be NP-complete in~\cite{garey1977rectilinear}.

\medskip
\noindent
\fbox{%
  \begin{minipage}{\dimexpr\linewidth-2\fboxsep-2\fboxrule}
  \textsc{Vertex Cover} problem
  \begin{itemize}[leftmargin=1.5em]
    \item \textbf{Input:} An undirected graph $G = (V,E)$ and an integer $k \in \mathbb{N}$.
    \item \textbf{Question:} Does there exist a \emph{vertex cover} $C \subseteq V$ of size at most $k$; that is, a set of at most $k$ vertices such that every edge in $E$ has at least one endpoint in $C$?
  \end{itemize}
  \end{minipage}%
}
\medskip

We transform a planar graph $G_1=(V,E)$ with maximum degree~$3$ into a 
unit disk graph $G_2$ such that $G_1$ has a vertex cover of size at 
most~$k$ if and only if $G_2$ has a hop dominating set of size at 
most~$k+k'$, where $k'$ is a constant determined by the construction.  We construct an intersection model for $G_2$ by making use of the 
following result from~\cite{valiant1981universality}:

\begin{lemma}[Valiant~\cite{valiant1981universality}]\label{lem:planarEmbedding}
A planar graph $G$ with maximum degree~$4$ can be embedded in the plane 
using $O(|V|)$ area in such a way that its vertices are at integer 
coordinates and its edges are drawn so that they are made up of line 
segments of the form $x=i$ or $y=j$, for integers $i$ and $j$ See~Fig.\ref{fig:planar embedding}.
\end{lemma}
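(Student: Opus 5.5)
The plan is to reconstruct Valiant's orthogonal grid drawing in three stages: a planar embedding, a visibility-style placement of vertices, and orthogonal routing of edges through ports. The final stage bounds the number of grid rows and columns used, which is where the claimed $O(|V|)$ area bound must come from.

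\textbf{Stage 1: normalise the graph.} I would first reduce to the biconnected case. Each cut vertex is handled by adding dummy edges between consecutive blocks around it, while keeping the maximum degree at most~$4$ (where necessary, a vertex is replaced by a short path of dummy vertices, contracted again at the end). A planar embedding can then be computed in linear time. Next I would fix an $st$-numbering of the biconnected graph and build the associated visibility representation in the manner of Tamassia--Tollis. Each vertex becomes a horizontal segment at height equal to its $st$-number. Each edge becomes a vertical segment, and the $x$-coordinates come from the $st$-numbering of the dual. Both numberings range over $O(|V|)$ values, because $|E|\le 2|V|$ by the degree bound.

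\textbf{Stage 2: shrink vertices to points.} I would collapse each horizontal vertex segment to a single integer point on its row. Since every vertex has degree at most~$4$, it has at most four incident vertical edge segments, so each vertex can be assigned one port on each side: north, south, east and west. Edges entering the segment at other columns are redirected along its row to the appropriate port, using one bend each, all drawn along lines $x=i$ or $y=j$. Planarity is preserved because the redirections only use the interior of the former vertex segment and the row immediately above or below it. Inserting such a row multiplies the grid dimensions by at most a constant factor.

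\textbf{Stage 3: compaction (main obstacle).} The routing above gives an orthogonal drawing whose numbers of rows and columns are both linear in $|V|$. The remaining and hardest step is to establish the area bound exactly as the statement words it, $O(|V|)$. I would first try a compaction argument: delete every row and every column that contains neither a vertex nor a bend, and then charge each surviving row and column to a vertex or bend. Since each edge has $O(1)$ bends, this controls the total \emph{number of grid features} of the drawing, namely vertices, bends and unit edge pieces, up to the total edge length. This charging step is the one I am least sure of. Counting occupied grid points is linear in $|V|$ only if total edge length is linear, and nested-cycle examples suggest that the bounding box can still be quadratic. So I expect to depend on the precise sense of "area" used in~\cite{valiant1981universality}, taking the statement as cited there.

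For the application in this section, what is actually used is only that the drawing is computable in polynomial time, has integer coordinates and axis-parallel edges, and has polynomially bounded size. Stages~1 and~2 deliver these properties independently of the final compaction step.
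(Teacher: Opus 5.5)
Your Stage~3 cannot be completed, and no better charging argument will fix it: the $O(|V|)$ bound in the statement is false under either reading of ``area'' you consider. Your nested-cycle suspicion is exactly the counterexample. Take $k$ concentric $4$-cycles $C^1,\dots,C^k$, with consecutive ones joined by a perfect matching. This graph, $C_4\times P_k$, is planar, has maximum degree $4$, and has $|V|=4k$. Each middle $C^i$ separates the lower cycles from the upper ones. In every plane drawing these two connected pieces lie on opposite sides of the Jordan curve $C^i$ (contract each piece to a point and you get an octahedron whose equator separates the two apexes). From this you get a chain of at least $k/2$ cycles, each drawn strictly inside the next. Vertices and bends sit at integer points and edges lie on lines $x=i$ or $y=j$. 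So each nesting level pushes every side of the bounding box out by at least $1$, and the $t$-th cycle of the chain has perimeter at least $4(2t-1)$. Hence both the bounding-box area and the total edge length (the number of occupied grid points) are $\Omega(k^2)=\Omega(|V|^2)$. What is actually available for crossing-free orthogonal grid drawings of planar graphs of maximum degree $4$ is an $O(|V|)\times O(|V|)$ grid, i.e.\ $O(|V|^2)$ area, for instance via the linear-time Tamassia--Tollis algorithm; the example shows this is tight. The paper gives no proof of this lemma. It only quotes it, with the misstated bound, so it has nothing that could fill your Stage~3.

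The right repair is to state and prove the $O(|V|^2)$ version. As you observe, polynomial size is all the reductions in Section~\ref{sec:unitdiskgraph} use. Your Stages~1--2 are essentially the Tamassia--Tollis construction ($st$-numbering, visibility representation, collapsing vertex segments to points), and they give that version once two details are fixed.

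\emph{Augmentation.} You do not need to protect the degree bound while augmenting. Make the graph biconnected, or even triangulate it, arbitrarily, and build the visibility representation. It has $O(|V|)$ rows and columns because $|E|\le 3|V|-6$, not because of the degree bound. Then delete the dummy edges before assigning ports. The step ``replace a vertex by a path and contract it at the end'' is unnecessary, and it is not a harmless operation on a finished drawing.

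\emph{Bends.} ``One bend each'' is wrong in general. When the columns of the incident edges are not arranged favourably around the chosen point, some edge needs two bends near the vertex and an auxiliary half-row. This happens, for example, with a vertex having three upward edges and a downward edge off the chosen column, or with the source having four upward edges, where one edge must leave through the downward port. This is exactly the case analysis Tamassia and Tollis carry out. It still costs $O(1)$ bends per edge and only a constant-factor blow-up of the grid.
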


 \begin{figure}[h]
    \centering
   \includegraphics[width=0.75\textwidth]{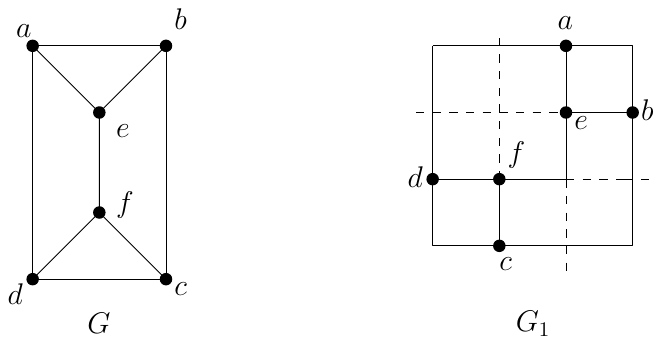}
    \caption{Planar embedding of a planar graph with max degree $4$}
    \label{fig:planar embedding}
\end{figure}

Algorithms to produce such embeddings efficiently are given for 
example in~\cite{clark1990unit, itai1982hamilton}. Using one of them 
we construct such an embedding of $G_1$, adjusting the scale so that 
the horizontal and vertical straight line segments that make up the 
edges are each of integer length~$8$. The vertices of $G_1$ are 
modeled by circles of radius~$1/2$ centered at the locations of the 
vertices in this embedding. The edges of $G_1$ are replaced by chains 
of radius-$1/2$ circles, augmented with gadgets attached at each 
interior integer coordinate point along the edge, specified as 
follows. If $e_i$ is the edge between vertices $u$ and $v$, then the 
set of circles used to represent it consists of a primary chain 
$C[e_i]=\{c_{i_1},c_{i_2},\dots,c_{i_{k_i}}\}$, where $k_i$ depends 
on the length of the embedding of $e_i$, together with a collection 
of attached gadgets, one at each interior integer coordinate point of 
$e_i$. The chain circles are positioned along the embedded segments 
of $e_i$ so that consecutive circles intersect, yielding an 
intersection pattern like that shown in Fig.~\ref{fig:unitdiskhop} 
(top left) for an edge made of a single horizontal line segment. In 
addition, at every integer coordinate point $d$ lying in the interior 
of the embedded edge $e_i$, let $c_{d_1}$ and $c_{d_2}$ denote the 
two chain circles of $C[e_i]$ positioned immediately to the left and 
right of $d$, respectively. We attach to the chain at $d$ a 
two-circle gadget consisting of radius-$1/2$ circles $c_{d_3}$ and 
$c_{d_4}$ placed perpendicular to the chain, with $c_{d_3}$ 
positioned so that it intersects each of $c_{d_1}$, $c_{d_2}$, and 
$c_{d_4}$, while $c_{d_4}$ intersects only $c_{d_3}$. We denote the 
resulting four-circle configuration at $d$ by 
$D[d]=\{c_{d_1},c_{d_2},c_{d_3},c_{d_4}\}$. Finally, every chain 
circle $w \in C[e_i]$ that intersects either a vertex circle (namely 
$u$ or $v$) or one of the flanking gadget circles $c_{d_1}, c_{d_2}$ 
(for some interior integer coordinate point $d$) has attached to it 
a single pendant radius-$1/2$ circle $p_w$, positioned so that $p_w$ 
intersects only $w$ and no other circle of the construction, as 
illustrated by the pendant circles drawn near $u$ and $v$ in 
Fig.~\ref{fig:unitdiskhop} (top right). The resulting intersection 
pattern is illustrated in Fig.~\ref{fig:unitdiskhop} (top right). 
The reader may verify that this representation for a horizontal edge 
can be modified to bend around corners if the edge to be represented 
consists of both horizontal and vertical segments, given that each 
segment by construction is of length~$8$. Applying this construction 
to every edge of $G_1$ yields the unit disk graph $G_2$, as 
illustrated in Fig.~\ref{fig:unitdiskhop} (bottom).

\begin{figure}[h]
    \centering
   \includegraphics[width=0.95\textwidth]{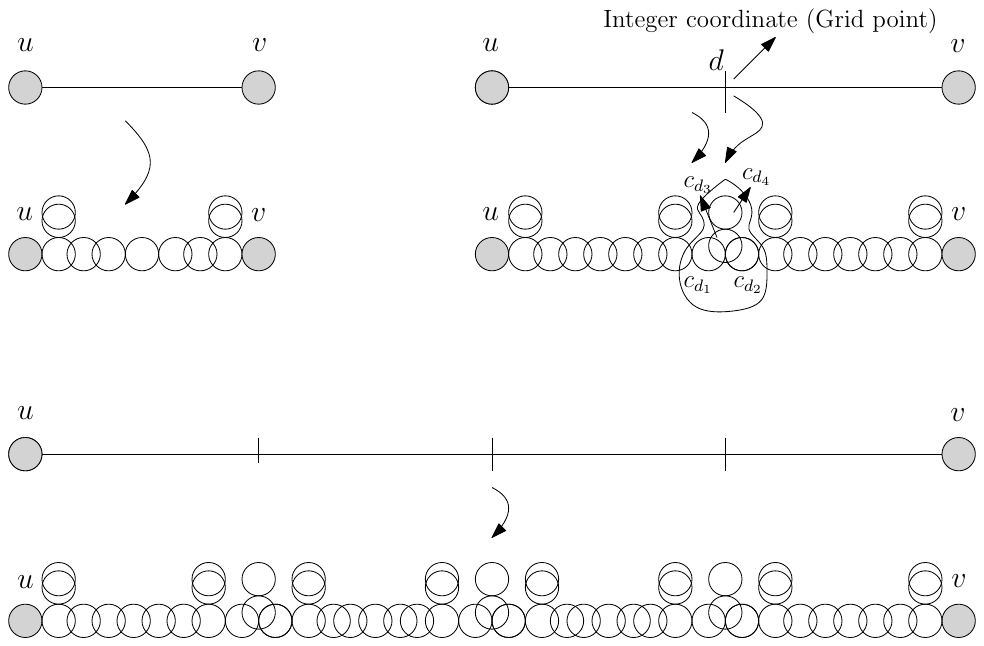}
    \caption{Gadget construction of $G_2$ from $G_1$}
    \label{fig:unitdiskhop}
\end{figure}

Thus, $G$ has a vertex cover of size $k$ if and only if $G_2$ has a hop dominating set of size $k + \sum_{(u,v)\in E(G)}(4k_{uv} -1)$, where $k_{uv}$ is the length of an edge between $u$ and $v$ in $G_1$.

Hence, the \textsc{Hop Domination} problem is NP-complete in unit disk graphs.


\subsection{$2$-Step Domination}

We sketch a polynomial time transformation to the \textsc{$2$-Step Domination} problem from 
\textsc{Planar Vertex Cover} restricted to graphs of maximum degree~$3$, 
which is shown to be NP-complete in~\cite{garey1977rectilinear}. 
We transform a planar graph $G_1=(V,E)$ with maximum degree~$3$ into a 
unit disk graph $G_2$ such that $G_1$ has a vertex cover of size at 
most~$k$ if and only if $G_2$ has a $2$-step dominating set of size 
at most~$k+k'$, where $k'$ is a constant determined by the 
construction. We construct an intersection model for $G_2$ by making 
use of Lemma \ref{lem:planarEmbedding}
\begin{figure}[]
    \centering
   \includegraphics[width=0.95\textwidth]{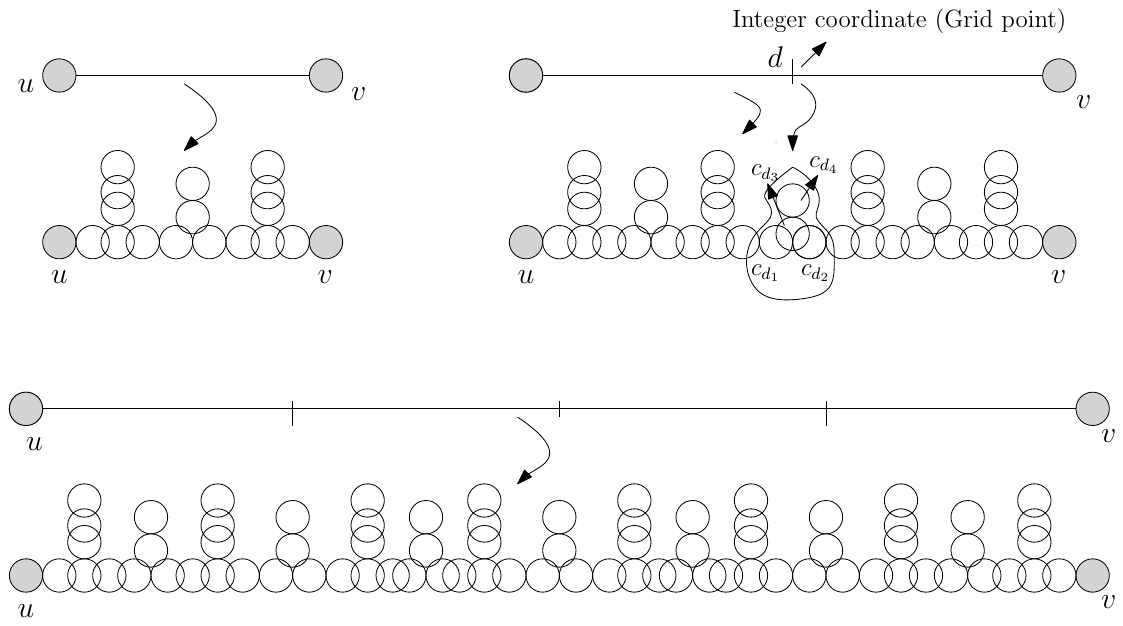}
    \caption{Gadget construction of $G_2$ from $G_1$}
    \label{fig: unit disk $2$-step}
\end{figure}

Efficient algorithms for constructing the embeddings described in Lemma~\ref{lem:planarEmbedding} are given, for example, in~\cite{clark1990unit, itai1982hamilton}. Using one of them 
we construct such an embedding of $G_1$, adjusting the scale so that 
the horizontal and vertical straight line segments that make up the 
edges are each of integer length~$8$. The vertices of $G_1$ are 
modeled by circles of radius~$1/2$ centered at the locations of the 
vertices in this embedding. The edges of $G_1$ are replaced by chains 
of radius-$1/2$ circles, augmented with gadgets attached at each 
interior integer coordinate point along the edge and with boundary 
gadgets attached near each endpoint, specified as follows. If $e_i$ 
is the edge between vertices $u$ and $v$, then the set of circles 
used to represent it consists of a primary chain 
$C[e_i]=\{c_{i_1},c_{i_2},\dots,c_{i_{k_i}}\}$, where $k_i$ depends 
on the length of the embedding of $e_i$, together with a collection 
of attached gadgets. The chain circles are positioned along the 
embedded segments of $e_i$ so that consecutive circles intersect, 
yielding an intersection pattern like that shown in 
Fig.~\ref{fig: unit disk $2$-step} (top left) for an edge made of a single 
horizontal line segment. At every integer coordinate point $d$ lying 
in the interior of the embedded edge $e_i$, let $c_{d_1}$ and 
$c_{d_2}$ denote the two chain circles of $C[e_i]$ positioned 
immediately to the left and right of $d$, respectively. We attach 
to the chain at $d$ a two-circle gadget consisting of radius-$1/2$ 
circles $c_{d_3}$ and $c_{d_4}$ placed perpendicular to the chain, 
with $c_{d_3}$ positioned so that it intersects each of $c_{d_1}$, 
$c_{d_2}$, and $c_{d_4}$, while $c_{d_4}$ intersects only $c_{d_3}$. 
We denote the resulting four-circle configuration at $d$ by 
$D[d]=\{c_{d_1},c_{d_2},c_{d_3},c_{d_4}\}$. In addition, in the 
segment of $e_i$ between vertex $v$ and the rightmost interior 
integer coordinate point of $e_i$ (and symmetrically in the segment 
between vertex $u$ and the leftmost interior integer coordinate 
point), a three-circle boundary gadget is attached to the chain. 
Specifically, let $w$ and $w'$ denote two distinct chain circles of 
$C[e_i]$ in this segment, with $w$ positioned closer to the 
rightmost interior integer coordinate point and $w'$ positioned 
closer to $v$. We attach to $w$ a pair of radius-$1/2$ circles 
$b^v_1$ and $b^v_2$ placed perpendicular to the chain, with $b^v_1$ 
positioned so that it intersects both $w$ and $b^v_2$, while $b^v_2$ 
intersects only $b^v_1$. We further attach to $w'$ a single 
radius-$1/2$ circle $b^v_3$ placed perpendicular to the chain, 
intersecting only $w'$. We denote the resulting three-circle 
boundary configuration near $v$ by 
$B[v]=\{b^v_1,b^v_2,b^v_3\}$, and define $B[u]$ symmetrically. The 
resulting intersection pattern is illustrated in 
Fig.~\ref{fig: unit disk $2$-step} (top right), with the circles of $B[v]$. The reader may verify that this 
representation for a horizontal edge can be modified to bend around 
corners if the edge to be represented consists of both horizontal and 
vertical segments, given that each segment by construction is of 
length~$8$. Applying this construction to every edge of $G_1$ yields 
the unit disk graph $G_2$, as illustrated in 
Fig.~\ref{fig: unit disk $2$-step} (bottom).

Thus, $G$ has a vertex cover of size $k$ if and only if $G_2$ has a $2$-step dominating set of size $k + \sum_{(u,v)\in E(G)}(8k_{uv} - 1)$, where $k_{uv}$ is the length of an edge between $u$ and $v$ in $G_1$.

Hence, the \textsc{$2$-Step Domination} problem is NP-complete in unit disk graphs.


\section{$3$-Regular graph}\label{sec:3regular}
In this section, we prove that \textsc{Hop Domination} problem and \textsc{$2$-Step Domination} problem are NP-complete for $3$-regular graphs.
\subsection{Hop Domination}
In this subsection, we show that \textsc{Hop Domination} problem in a $3$-regular graph is NP-complete using the fact that the vertex cover problem in a $3$-regular graph is NP-complete\cite{alimonti2000some}. We reduce vertex cover problem to our problem.
Given a $3$-regular graph $G_1 = (V(G_1), E(G_1))$ with vertex set $V(G_1) = \{v_1, v_2, \dots, v_n\}$, we construct a new graph $G_2$ as follows:
\begin{enumerate}
    \item  First, we take the set of vertex $\{u_1, u_2, \dots, u_n\}$, where each $u_i \in V(G_2)$ corresponds to the vertex $v_i \in V(G_1)$.
       \item For each edge $(v_i,v_j)\in E(G_1)$, where $i < j$, we add a new vertex $u_{ij}$ and join with $u_i$ and $u_j$. 
       \item For each $u_{ij}$, we introduce five new vertices $a_{ij},b_{ij},c_{ij},d_{ij},e_{ij}$ and consider the edges $(a_{ij},u_{ij}),(a_{ij},b_{ij}),(a_{j},c_{ij}),(b_{ij},c_{ij}),(b_{ij},d_{ij}),(c_{ij},e_{ij})$. 
       \item For each $d_{ij}$, we introduce $12$ new vertices $d^{1k}_{ij}$ and $d^{2k}_{ij}$, where $k=1,2,\dots 6$.
       \item At each $d_{ij}$, we introduce two $6$ length path i.e, $d_{ij}d_{ij}^{11}d_{ij}^{12}\dots d_{ij}^{16}$ and $d_{ij}d_{ij}^{21}d_{ij}^{22}\dots d_{ij}^{26}$.
       \item We consider the edges $(d_{ij}^{1k},d_{ij}^{2k})$ where $k=1,2,3,4$.
       \item Then we consider the edges $(d_{ij}^{25},d_{ij}^{16}),(d_{ij}^{15},d_{ij}^{26})$.
        \item For each $e_{ij}$, we introduce $12$ new vertices $e^{1k}_{ij}$ and $e^{2k}_{ij}$, where $k=1,2,\dots 6$.
       \item At each $e_{ij}$, we introduce two $6$ length path i.e, $e_{ij}e_{ij}^{11}e_{ij}^{12}\dots e_{ij}^{16}$ and $e_{ij}e_{ij}^{21}e_{ij}^{22}\dots e_{ij}^{26}$.
       \item We introduce the edges $(e_{ij}^{1k},e_{ij}^{2k})$ where $k=1,2,3,4$.
       \item We introduce the edges $(e_{ij}^{25},e_{ij}^{16}),(e_{ij}^{15},e_{ij}^{26})$.
       
        \textcolor{red}.
\end{enumerate}
\begin{figure}[h]
    \centering
   \includegraphics[width=\textwidth]{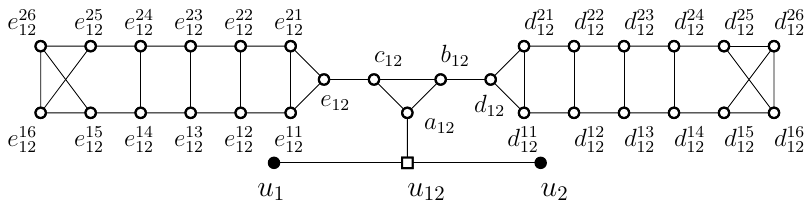}
    \caption{Gadget construction of $G_2$ from $G_1$}
    \label{}
\end{figure}
    \begin{lemma}\label{lemma:k free hop}
           $G_1$ has a vertex cover of size at most $k$ if and only if $G_2$ has a hop dominating set of size at most $k+6m$, where $m=|E(G_1)|$.
       \end{lemma}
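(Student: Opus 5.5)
The proof has two directions, and both rest on a local analysis of the two ladder-like gadgets hanging at $d_{ij}$ and at $e_{ij}$. Call $L(d_{ij})$ the $13$ vertices $d_{ij},d_{ij}^{1k},d_{ij}^{2k}$ ($k\le 6$), and define $L(e_{ij})$ the same way. First I will check that $G_2$ is $3$-regular, reading the edge $(a_j,c_{ij})$ as $(a_{ij},c_{ij})$. I will also record the distance-$2$ neighbourhoods that matter:
\[
N(u_i,2)=\{u_j : v_iv_j\in E(G_1)\}\cup\{a_{ij}\}
\]
and
\[
N(u_{ij},2)=\{u_{i'} : v_{i'}\text{ adjacent to } v_i \text{ or } v_j\}\cup\{b_{ij},c_{ij}\}\cup\{u_{ik},u_{jl}\}.
\]

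\textbf{Key local claim.} For every edge, $S$ must contain at least $3$ vertices of $L(d_{ij})\cup\{b_{ij}\}$ and at least $3$ vertices of $L(e_{ij})\cup\{c_{ij}\}$. These $6m$ vertex sets are pairwise disjoint. I will prove the claim by examining the far end of the ladder, namely $d^{15}_{ij},d^{16}_{ij},d^{25}_{ij},d^{26}_{ij}$ and the rungs. Their second neighbourhoods lie inside the gadget, so each such vertex must be hop-dominated from within the gadget or must belong to $S$. A short case analysis on the distance-$2$ structure of the ladder should show that two gadget vertices never suffice.

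I will also exhibit an explicit $3$-set, for instance $\{d_{ij}^{13},d_{ij}^{23},d_{ij}^{16}\}$ or a similar choice, found by checking distances. This set must hop-dominate all of $L(d_{ij})\setminus\{d_{ij}\}$ together with $d_{ij}$, and the neighbours $d^{11}_{ij},d^{21}_{ij}$ must then hop-dominate $b_{ij}$. Symmetrically, the set chosen in $L(e_{ij})$ should hop-dominate $e_{ij}$ and $c_{ij}$. In addition, $a_{ij}$ is at distance $2$ from $d_{ij}$ and $e_{ij}$, so it is covered as soon as one of these lies in $S$. If not, I will adjust the choice so that $a_{ij}$ is covered through $b_{ij}$ or $c_{ij}$. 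Finding a $3$-set that works on all of these simultaneously is the main obstacle and the step I expect to be fiddly; I would try candidate triples by hand on the $13$-vertex ladder first.

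\textbf{Forward direction.} Given a vertex cover $C$ with $|C|\le k$, take $S=\{u_i : v_i\in C\}$ together with the $6$ chosen gadget vertices for every edge. The gadget choices handle every vertex except $u_i$ and $u_{ij}$. A vertex $u_{ij}$ is at distance $2$ from $u_i$ and $u_j$, through their other edge-vertices? No; in fact $u_{ij}$ is adjacent to $u_i$, so I will instead use $b_{ij}$ or $c_{ij}$, or some $u_{i'}$ with $v_{i'}\in C$ adjacent to $v_i$. Since $G_1$ is $3$-regular and $C$ is a cover, either $v_i\notin C$, so all of its neighbours lie in $C$, or $v_j\notin C$, which is symmetric. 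The remaining case $v_i,v_j\in C$ is handled through another neighbour of $v_i$ that lies in $C$, or through a gadget vertex. I will verify this carefully, and if it fails, I will reconsider whether $u_{ij}$ is covered via the $b/c$ side. For a vertex $u_i\notin S$, pick any neighbour $v_j$ of $v_i$. The edge $v_iv_j$ is covered and $v_i\notin C$, so $u_j\in S$, and $d(u_i,u_j)=2$.

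\textbf{Backward direction.} Let $S$ be a hop dominating set with $|S|\le k+6m$. By the local claim, at most $k$ vertices of $S$ lie outside the gadget parts. Next I normalise. Any vertex of $S$ among $u_{ij},a_{ij}$, or any surplus gadget vertex, is replaced by $u_i$; this does not increase $|S|$ and keeps $S$ hop dominating, because $u_i$ hop-dominates everything such a vertex was needed for at the $u$-level. After normalisation, set $C=\{v_i : u_i\in S\}$, so $|C|\le k$. If some edge $v_iv_j$ had $u_i,u_j\notin S$, then $u_i$ would have to be hop-dominated by some $u_{j'}$ with $v_{j'}$ a neighbour of $v_i$, or by $a_{ij'}$. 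I will argue that the normalisation, combined with the tightness of the count, forces the edge-level vertex that covers $u_i$ to be $u_j$ itself. This gives a contradiction, so $C$ is a vertex cover of $G_1$.
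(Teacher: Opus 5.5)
Your local claim (at least three vertices of $L(d_{ij})\cup\{b_{ij}\}$ and of $L(e_{ij})\cup\{c_{ij}\}$) is the same one the paper uses. However, both directions built on it have genuine gaps. Below I write $a,b,c,d,e,d^{11},\dots$ for $a_{ij},b_{ij},c_{ij},d_{ij},e_{ij},d^{11}_{ij},\dots$.

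\textbf{Forward direction.} Your formula for $N(u_{ij},2)$ is wrong. Since $N(u_{ij})=\{u_i,u_j,a\}$, one gets $N(u_{ij},2)=\{b,c\}\cup\{u_{i\ell},u_{j\ell}\}$, where the $u_{i\ell},u_{j\ell}$ are the other edge-vertices at $u_i$ and $u_j$. A vertex $u_{i'}$ with $v_{i'}$ adjacent to $v_i$ is at distance $3$. So a set made of cover vertices $u_l$ plus vertices inside the two $13$-vertex ladders never hop-dominates $u_{ij}$.

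Moreover, the triple you are searching for does not exist. As the edges are listed, the closed hop-neighbourhoods of $d^{16},d^{15},d^{25}$ are
\[
\{d^{16},d^{14},d^{24},d^{26}\},\quad \{d^{15},d^{13},d^{24},d^{25}\},\quad \{d^{25},d^{23},d^{14},d^{15}\}.
\]
These have empty common intersection and all lie in $F=\{d^{13},\dots,d^{16},d^{23},\dots,d^{26}\}$, so two of your three vertices must lie in $F$. No vertex of $F$ is at distance $2$ from $d$ or $b$, because $N(d,2)=\{a,c,d^{12},d^{22}\}$ and $N(b,2)=\{u_{ij},e,d^{11},d^{21}\}$. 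Hence the third vertex would have to lie both in $\{d,d^{12},d^{22}\}$ and in $\{d^{11},d^{21}\}$, which is impossible. Your example $\{d^{13},d^{23},d^{16}\}$ indeed covers the twelve ladder vertices but neither $d$ nor $b$.

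The missing idea is to put $b$ and $c$ themselves into $S$, as the paper does with $\{b,c,d^{23},d^{24},e^{23},e^{24}\}$. Then $b$ hop-dominates $u_{ij},e,d^{11},d^{21}$, and $c$ hop-dominates $d,e^{11},e^{21}$. Finally, $a$ is hop-dominated by whichever of $u_i,u_j$ lies in the cover; you recorded $a\in N(u_i,2)$ but never used it.

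\textbf{Backward direction.} Your normalisation (replace $u_{ij}$, $a$, or surplus gadget vertices by $u_i$ and remain hop dominating) is unjustified. The vertex $u_{ij}$ hop-dominates $b,c,u_{i\ell},u_{j\ell}$, and $a$ hop-dominates $d,e$; $u_i$ reaches none of these at distance $2$.

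More seriously, $u_i$ is the wrong test vertex. It can be hop-dominated by $u_{j'}$ for any neighbour $v_{j'}$ of $v_i$: if $\{v_l:u_l\in S\}$ merely dominates $G_1$, every $u_l$ is already hop-dominated. Also, $|\{l:u_l\in S\}|$ may be strictly below $k$, so there is no tightness forcing $u_j\in S$. The vertex that encodes the edge is $a$, with $N(a,2)=\{u_i,u_j,d,e\}$.

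You therefore need one more local fact: if $d\in S$, then $|S\cap(L(d_{ij})\cup\{b\})|\ge 4$. To see this, suppose $S$ meets this set only in $d$ and two vertices of $F$. Since $N(d^{11},2)=\{b,d^{13},d^{22}\}$, $d^{11}$ forces $d^{13}\in S$; likewise $d^{21}$ forces $d^{23}\in S$. But $\{d^{13},d^{23}\}$ misses the closed hop-neighbourhood of $d^{16}$. The same holds symmetrically for $e$.

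Then count instead of normalising. Each edge with $u_i,u_j\notin S$ owns a private extra vertex: either $a$ itself, or a fourth vertex in one of its two ladders. So $|\{l:u_l\in S\}|$ plus the number of uncovered edges is at most $k$. Adding one endpoint per uncovered edge then yields a vertex cover of size at most $k$.
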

       \begin{proof}
        Let us assume $S_{G_1}$ is a vertex cover of size at most $k$. Now we take $S_{G_2} =S_{G_1} \cup \ \{b_{ij},c_{ij},d^{24}_{ij},d^{23}_{ij},e^{24}_{ij},e^{23}_{ij}\}$, which is a hop dominating set. Conversely, let $G_2$ have a hop dominating set $S_{G_2}$ of size at most $k+6m$. We denote the induced subgraph of $G_2[\{b_{ij},d_{ij},d_{ij}^{1k},d_{ij}^{2k}\}]$, where $k=1,2,\dots,6$ as $B$ and also the induced subgraph of $G_2[\{c_{ij},e_{ij},e_{ij}^{1k},e_{ij}^{2k}\}]$, where $k=1,2,\dots,6$ as $C$. Now to hop dominate $d^{25}_{ij},d^{26}_{ij},d^{15}_{ij},d^{16}_{ij}$ these four vertices, at least two vertices belong to $S_{G_2}$. Let $A\subseteq S_{G_2}$, hop dominate $d^{25}_{ij},d^{26}_{ij},d^{15}_{ij},d^{16}_{ij}$ where $|A|=2$. Notice that at least one of $d^{21}_{ij},d^{11}_{ij}$ these vertices can not hop dominate by $A$. To hop dominate $d^{21}_{ij},d^{11}_{ij}$ these vertices we can take $b_{ij}$ in $S_{G_2}$.  Hence, To hop dominate $B$ at least $3$ vertices belongs to $S_{G_2}$. similarly to hop dominate $C$ at least $3$ vertices belong to $S_{G_2}$. To hop dominate $e^{21}_{ij},e^{11}_{ij}$ these vertices we can take $c_{ij}$ in $S_{G_2}$.  To hop dominate $d^{21}_{ij},d^{11}_{ij}$ these vertices we can take $b_{ij}$ in $S_{G_2}$. Hence, for each edge, at least $6$ vertices belong to $S_{G_2}$. If $e_{ij},a_{ij},d_{ij}$ are not in $S_{G_2}$, to hop dominate $a_{ij}$ at least one of $u_i$ or $u_j$ belongs to $S_{G_2}$. If some of $e_{ij},a_{ij},d_{ij}$ in $S_{G_2}$, we can delete these vertices from $S_{G_2}$ and  one of $u_i$ or $u_j$ or both included in $S_{G_2}$. Notice that, the set $\{v_l:u_l\in S_{G_2}$\} is a vertex cover of $G_1$ of size $\leq k$. 
     \end{proof}
Hence the \textsc{Hop Domination} problem is NP-complete for $3$-regular graphs.
\subsection{$2$-Step Domination}
     In this subsection, we show that \textsc{$2$-Step Domination} problem in a $3$-regular graph is NP-complete. We reduce the vertex cover problem to our problem.
Given a $3$-regular graph $G_1 = (V(G_1), E(G_1))$ with vertex set $V(G_1) = \{v_1, v_2, \dots, v_n\}$, we construct a new graph $G_2$ as follows:
\begin{enumerate}
    \item  First, we take the set of vertex $\{u_1, u_2, \dots, u_n\}$, where each $u_i \in V(G_2)$ corresponds to the vertex $v_i \in V(G_1)$.
       \item For each edge $(v_i,v_j)\in E(G_1)$, where $i < j$, we add a new vertex $u_{ij}$ and join with $u_i$ and $u_j$. 
       \item For each $u_{ij}$, we introduce three new vertices $a_{ij},b_{ij},c_{ij}$ and consider the edges $(a_{ij},u_{ij}),(a_{ij},b_{ij}),(a_{j},c_{ij})$. 
       \item For each $b_{ij}$, we introduce $4$ new vertices $b^{11}_{ij},b^{12}_{ij}$ and $b^{21}_{ij},b^{22}_{ij}$.
       \item At each $c_{ij}$, we introduce $4$ new vertices $c^{11}_{ij},c^{12}_{ij}$ and $c^{21}_{ij},c^{22}_{ij}$.
       \item We introduce the edges $(b_{ij}^{11},b_{ij}^{22}),(b_{ij}^{21},b_{ij}^{12})$,$(c_{ij}^{11},c_{ij}^{22}),(c_{ij}^{21},c_{ij}^{12})$.
\end{enumerate}
\begin{figure}[h]
    \centering
   \includegraphics[width=0.55\textwidth]{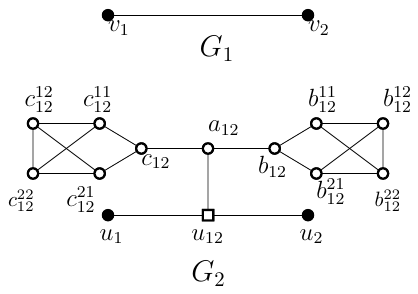}
    \caption{Gadget construction of $G_2$ from $G_1$}
    \label{}
\end{figure}
 \begin{lemma}\label{lemma:k free hop}
           $G_1$ has a vertex cover of size at most $k$ if and only if $G_2$ has a $2$-step dominating set of size at most $k+3m$, where $m=|E(G_1)|$.
       \end{lemma}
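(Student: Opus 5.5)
We prove the two directions separately. We read the (under-specified) construction as follows: at $b_{ij}$ hang the two paths $b_{ij}b^{11}_{ij}b^{12}_{ij}$ and $b_{ij}b^{21}_{ij}b^{22}_{ij}$, add the cross edges $(b^{11}_{ij},b^{22}_{ij})$ and $(b^{21}_{ij},b^{12}_{ij})$, and do the same at $c_{ij}$. We also read the edge written $(a_j,c_{ij})$ as $(a_{ij},c_{ij})$. The first step is to tabulate, for every gadget vertex, its \emph{second neighbourhood} $N(x,2)$:
\begin{itemize}
\item $N(a_{ij},2)=\{u_i,u_j,b^{11}_{ij},b^{21}_{ij},c^{11}_{ij},c^{21}_{ij}\}$;
\item $N(b^{12}_{ij},2)=\{b_{ij},b^{22}_{ij}\}$;
\item $N(b^{11}_{ij},2)=\{a_{ij},b^{21}_{ij}\}$;
\item $N(b_{ij},2)=\{u_{ij},c_{ij},b^{12}_{ij},b^{22}_{ij}\}$;
\end{itemize}
and symmetrically for the $c$-side and for $u_{ij}$, $u_i$.

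\emph{Forward direction.} Given a vertex cover $S_{G_1}$ of size at most $k$, we take
\[
S=\{u_l : v_l\in S_{G_1}\}\cup\{a_{ij},b_{ij},c_{ij} : v_iv_j\in E(G_1)\},
\]
which has size at most $k+3m$. We check the vertices using the table above.
\begin{itemize}
\item $b_{ij}$ dominates $u_{ij}$, $c_{ij}$, $b^{12}_{ij}$ and $b^{22}_{ij}$; symmetrically $c_{ij}$ dominates $b_{ij}$, $c^{12}_{ij}$ and $c^{22}_{ij}$.
\item $a_{ij}$ dominates $b^{11}_{ij}$, $b^{21}_{ij}$, $c^{11}_{ij}$, $c^{21}_{ij}$, and also $u_i$ and $u_j$. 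Hence every $u_l$ is dominated, since $G_1$ is $3$-regular and so $v_l$ has an incident edge.
\item $a_{ij}$ itself is dominated by whichever of $u_i,u_j$ lies in the cover, via the path $u_i u_{ij} a_{ij}$.
\end{itemize}
Thus $S$ is a $2$-step dominating set.

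\emph{Converse.} Let $S$ be a $2$-step dominating set of size at most $k+3m$. For a fixed edge $v_iv_j$, the set $S$ must contain:
\begin{itemize}
\item one of $\{b_{ij},b^{22}_{ij}\}$, to dominate $b^{12}_{ij}$;
\item one of $\{c_{ij},c^{22}_{ij}\}$, to dominate $c^{12}_{ij}$;
\item one of $\{a_{ij},b^{21}_{ij}\}$ and one of $\{a_{ij},c^{21}_{ij}\}$, to dominate $b^{11}_{ij}$ and $c^{11}_{ij}$.
\end{itemize}
These four requirements are met by disjoint choices unless $a_{ij}$ is used. So each edge gadget contains at least $3$ vertices of $S$, and exactly $3$ forces $a_{ij}\in S$ together with one vertex from each of the two $b$/$c$ pairs.

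Next, $a_{ij}$ must itself be dominated, by a vertex of $N(a_{ij},2)=\{u_i,u_j,b^{11}_{ij},b^{21}_{ij},c^{11}_{ij},c^{21}_{ij}\}$. If no $u_i,u_j$ is in $S$, the gadget must therefore carry a fourth vertex of $S$. We then run a normalization argument: in every gadget with $\ge 4$ vertices of $S$, replace the gadget part of $S$ by $\{a_{ij},b_{ij},c_{ij}\}$ and add $u_i$. This does not increase $|S|$, and the result is still $2$-step dominating. The latter holds because the gadget vertices are dominated internally as in the forward direction, $u_{ij}$ is dominated by $b_{ij}$, and $a_{ij}$ is dominated by $u_i$.

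After normalization, every edge gadget contributes exactly $3$ and forces one of its endpoints $u_i,u_j$ into $S$. Hence $\{v_l : u_l\in S\}$ is a vertex cover of size at most $|S|-3m\le k$.

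\emph{Main obstacle.} The delicate step is this exchange. We must verify that removing non-standard gadget vertices (e.g.\ $b^{21}_{ij}$, $b^{22}_{ij}$, $u_{ij}$) never leaves a vertex outside the gadget undominated. The only such vertices are the $u_l$ and the $u_{il}$, reachable at distance $2$ from $u_{ij}$ and $a_{ij}$. So we would first check that every $u_l$ remains dominated by some $a_{lj}\in S$, and that $u_{ij}$ is dominated by $b_{ij}$, not needing $u_{il}$.
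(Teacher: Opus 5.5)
Your proof is correct and essentially the paper's: the same witness set $S_{G_1}\cup\{a_{ij},b_{ij},c_{ij}\}$, at least three solution vertices per edge gadget forced by $b^{12}_{ij},c^{12}_{ij},b^{11}_{ij},c^{11}_{ij}$, and domination of $a_{ij}$ forcing $u_i$ or $u_j$. Your exchange for gadgets with $\geq 4$ solution vertices justifies the paper's unexplained ``we can take $a_{ij}$''; your ``main obstacle'' does not actually arise, since the final bound is pure counting and never needs the modified set to remain dominating. Your reading without the edge $b^{12}_{ij}b^{22}_{ij}$ leaves $G_2$ non-cubic, but you only use the inclusion $N(b^{12}_{ij},2)\subseteq\{b_{ij},b^{22}_{ij}\}$, so the argument carries over unchanged to the $3$-regular gadget with that edge, where $b_{ij},c_{ij}\in S$ is forced as the paper asserts.
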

       \begin{proof}
        Let us assume $S_{G_1}$ is a vertex cover of size at most $k$. Now we take $S_{G_2} =S_{G_1} \cup \ \{a_{ij},b_{ij},c_{ij}\}$, which is a $2$-step dominating set. Conversely, let $G_2$ have a $2$-step dominating set $S_{G_2}$ of size at most $k+3m$. To hop dominate $c^{12}_{ij}$ and $c^{22}_{ij}$ $c_{ij}$ must be in $S_{G_2}$, similarly, to hop dominate $b^{12}_{ij}$ and $b^{22}_{ij}$ $b_{ij}$ must be in $S_{G_2}$. Now, notice that till now $b^{11}_{ij}$ and $b^{21}_{ij}$ and $c^{11}_{ij}$ and $c^{21}_{ij}$ are not hop dominated, thus for every edge $(v_i,v_j)\in E(G_1)$ at least $3$ vertices from $V(G_2)$ are in $S_{G_2}$. To hop dominate $b^{11}_{ij}$ and $b^{21}_{ij}$ and $c^{11}_{ij}$ and $c^{21}_{ij}$ we can take $a_{ij}$ in $S_{G_2}$. Now to hop dominate any vertex $\in a_{ij}$, either $u_i$ or $u_j$ must be in $S_{G_2}$. Notice that, the set $\{v_l:u_l\in S_{G_2}$\} is a vertex cover of $G_1$ of size $\leq k$.

        Hence the \textsc{$2$-Step Domination} problem is NP-complete for $3$-regular graphs.
     \end{proof}
\section{$d$-Regular Graphs, for $d\geq4$} \label{sec:dregular}
In this section, we prove that \textsc{Hop Domination} problem and \textsc{$2$-Step Domination} problem are NP-complete for $d$-regular graphs, where $d\geq4$.
From the Erdős-Gallai Theorem~\cite{erdos1960grafok} and the Havel-Hakimi criterion~\cite{hakimi1962realizability,havel1955remark}, we can say the following.

\begin{lemma}[\cite{erdos1960grafok,hakimi1962realizability,havel1955remark}]\label{lem:ErdosHavelHakimi}\label{lem:havel_hakimi}
    A simple $d$-regular graph on $n$ vertices can be formed in polynomial time when  $n \cdot d$ is even and $d < n$.
\end{lemma}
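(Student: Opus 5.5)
We prove Lemma~\ref{lem:ErdosHavelHakimi} by giving an explicit polynomial-time construction, namely a circulant graph, and we also note why the Erd\H{o}s--Gallai and Havel--Hakimi criteria give the same conclusion. Let $n$ and $d$ satisfy $0\le d<n$ with $nd$ even. Take the vertex set $\mathbb{Z}_n=\{0,1,\dots,n-1\}$.

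If $d$ is even, write $d=2t$ with $t\le (n-1)/2$. Join each $i$ to $i\pm 1,i\pm 2,\dots,i\pm t \pmod n$. Because $t<n/2$, the $2t$ values $\pm 1,\dots,\pm t$ are distinct and nonzero modulo $n$. So the graph has no loops and no multiple edges, and every vertex has degree exactly $2t=d$.

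If $d$ is odd, then $nd$ even forces $n$ to be even. Write $d=2t+1$ with $2t+1\le n-1$, so $t\le (n-2)/2$. Join each $i$ to $i\pm 1,\dots,i\pm t$ and also to the antipodal vertex $i+n/2$. The offsets $\pm1,\dots,\pm t$ are distinct and nonzero, and none equals $n/2$ because $t<n/2$. The antipodal offset is an involution, so it adds exactly one edge at each vertex. Hence the graph is simple and $d$-regular.

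The construction lists $nd/2$ edges by direct arithmetic, so it runs in $O(nd)\subseteq O(n^2)$ time.

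Alternatively, the constant sequence $(d,\dots,d)$ with even sum satisfies the Erd\H{o}s--Gallai inequalities, since $kd\le k(k-1)+(n-k)\min(k,d)$ holds for each $k$ when $d\le n-1$ (check the cases $k\le d$ and $k>d$). Running Havel--Hakimi on this sequence then realizes it greedily in polynomial time. The only point that needs care is keeping the circulant offsets distinct and nonzero modulo $n$, which is exactly where the hypothesis $d<n$ is used.
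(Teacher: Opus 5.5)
Your proof is correct, and your main argument takes a different route from the paper. The paper gives no argument of its own. It cites the Erd\H{o}s--Gallai theorem (the constant sequence $(d,\dots,d)$ is graphic) and the Havel--Hakimi procedure (which realizes a graphic sequence greedily in polynomial time), and it does not verify the Erd\H{o}s--Gallai inequalities for this sequence.

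Your primary route instead writes down an explicit circulant graph on $\mathbb{Z}_n$ with offsets $\{\pm1,\dots,\pm t\}$, plus the antipodal offset $n/2$ when $d$ is odd. Your distinctness checks are exactly what is needed: $2\le a+b\le 2t<n$ for $1\le a,b\le t$, and $t<n/2$ so no offset equals $n/2$. The parity of $nd$ enters only to force $n$ even in the odd case. This gives a self-contained, constructive proof with $O(nd)$ running time and a concrete, highly symmetric graph, with no appeal to a realizability theorem.

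Your alternative paragraph is the route the paper cites, and it supplies the case check the paper omits. For $k\le d$ the right-hand side is $k(n-1)\ge kd$, and for $k>d$ already $k(k-1)\ge kd$. That route's advantage is generality, since Havel--Hakimi realizes any graphic degree sequence, not just constant ones. For the regular case, though, it is more machinery than your explicit construction needs.
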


We use Lemma \ref{lem:ErdosHavelHakimi} to prove the NP-completeness of the \textsc{Hop Domination} and \textsc{$2$-Step Domination} problem on $d$-regular graphs, where $d\geq4$.

\subsection{Hop Domination}
  In this subsection, we show that \textsc{Hop Domination} problem is NP-complete in $d$-regular graphs, where $d\geq4$. It is known that \textsc{Vertex Cover} problem is NP-complete for $d$-regular graphs~\cite{garey2002computers}.  We reduce this problem to our problem.
Given a $d\geq 4$-regular graph $G_1 = (V(G_1), E(G_1))$ with vertex set $V(G_1) = \{v_1, v_2, \dots, v_n\}$, we construct a new graph $G_2$ as follows:
\begin{figure}[h]
    \centering
   \includegraphics[width=0.9\textwidth]{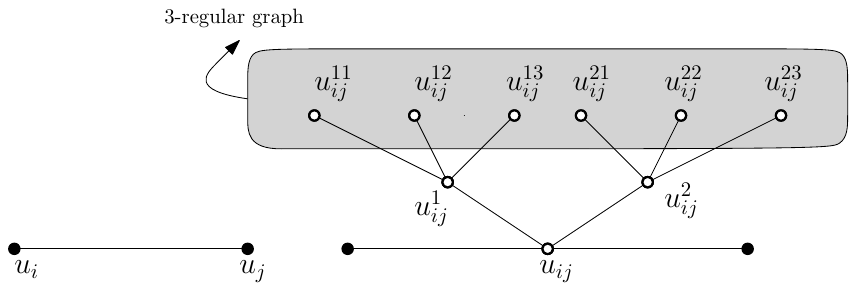}
    \caption{Gadget construction of $G_2$ from $G_1$, for example $4$ regular graph}
    \label{}
\end{figure}
\begin{enumerate}
    \item First, we take the set of vertices $\{u_1, u_2, \dots, u_n\}$, where each $u_i \in V(G_2)$ corresponds to the vertex $v_i \in V(G_1)$.
       \item For each edge $(v_i,v_j)\in E(G_1)$, where $i < j$, we add a new vertex $u_{ij}$ and join it with $u_i$ and $u_j$.
       \item For each $u_{ij}$, we introduce $(d-2)$ new vertices $u^1_{ij},u^2_{ij},\dots,u^{d-2}
       _{ij}$ and join each of them with $u_{ij}$.
       \item For each $u^k_{ij}$, we introduce $(d-1)$ new vertices $u^{k1}_{ij},u^{k2}_{ij},\dots,u^{k(d-1)}_{ij}$.
       We denote the induced the $H_{ij}:=G_2[\{u_i,u_j,u_{ij}\} \cup \{u_{ij}^k,u_{ij}^{km}:k=1,2\dots,d-2; m=1,2,\dots,d-1\}]$, $H_{ij}^0=G_2[\{u_i,u_j,u_{ij}\}]$, $H_{ij}^1=G_2[\{u_{ij}^{k}:l=1,2,\dots,d-2\}]$, $H_{ij}^2=G_2[\{u_{ij}^{km}:k=1,2,\dots,d-2,m=1,2,\dots,d-1,\}]$, therefore $H_{ij}=H_{ij}^0\cup H^1_{ij}\cup H^2_{ij}$.
       \item Using Lemma \ref{lem:havel_hakimi}, we form a $d$-regular graph with the vertices of $H^2_{ij}$, for every $u_{ij}$.
\end{enumerate}
 \begin{lemma}\label{lemma:d regular hop}
           $G_1$ has a vertex cover of size at most $k$ if and only if $G_2$ has a hop dominating set of size at most $k+m$, where $m=|E(G_1)|$.
       \end{lemma}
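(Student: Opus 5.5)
The plan is to prove the two directions separately. The reverse direction rests on one structural fact: every private gadget must contain a vertex of the hop dominating set, and if it contains only one, that vertex must be $u_{ij}$. For each edge $(v_i,v_j)$ let $P_{ij}=V(H_{ij})\setminus\{u_i,u_j\}$, the private part of the gadget. I read step~5 of the construction as making $H^2_{ij}$ $(d-1)$-regular on its own $(d-2)(d-1)$ vertices, so that $G_2$ is $d$-regular. Note that $(d-2)(d-1)$ is even and exceeds $d-1$ for $d\ge 4$, so Lemma~\ref{lem:havel_hakimi} applies.

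\emph{Forward direction.} Let $C$ be a vertex cover of $G_1$ with $|C|\le k$, and set $S=\{u_i : v_i\in C\}\cup\{u_{ij} : (v_i,v_j)\in E(G_1)\}$, so $|S|\le k+m$. We check that every vertex outside $S$ is at distance exactly $2$ from $S$.
\begin{itemize}
\item Each $u^{km}_{ij}$ is at distance $2$ from $u_{ij}$ via $u^k_{ij}$.
\item Each $u^k_{ij}$ is at distance $2$ from $u^{k'}_{ij}$ for $k'\neq k$, which exists since $d-2\ge 2$. (We need $u^{k'}_{ij}\in S$ here; I will instead use that $u^k_{ij}$ reaches some $u^{k'm}_{ij}$, or simply verify it through $u_{ij}$'s neighbourhood. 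In any case $u^k_{ij}$ is at distance $1$ from $u_{ij}$, and this case needs a careful check of the $H^2$ edges.)
\item If $u_i\notin S$, then $v_i\notin C$, so every $G_1$-neighbour $v_j$ lies in $C$. Hence $u_j\in S$, and $u_j$ is at distance $2$ from $u_i$ via $u_{ij}$.
\end{itemize}

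\emph{Reverse direction.} Let $S$ be a hop dominating set with $|S|\le k+m$.

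First, I show $S\cap P_{ij}\neq\emptyset$ for every edge. Take any $w=u^{km}_{ij}$. Every vertex at distance exactly $2$ from $w$ lies in $P_{ij}$: these are $u_{ij}$, the vertices $u^{k'}_{ij}$ adjacent to $H^2$-neighbours of $w$, and vertices of $H^2_{ij}$. So either $w\in S$ or $w$ is dominated from inside $P_{ij}$.

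Next comes the key claim: if $|S\cap P_{ij}|=1$, then $S\cap P_{ij}=\{u_{ij}\}$. A single vertex $x\in H^1_{ij}\cup H^2_{ij}$ cannot hop dominate all of $H^1_{ij}\cup H^2_{ij}$. The reason is that $x$ is adjacent to at most $d$ of these vertices, and vertices adjacent to $x$ are at distance $1$, not $2$. One then checks that some vertex of $H^2_{ij}$ or $H^1_{ij}$ adjacent to $x$ has no other dominator in $P_{ij}$. I expect this to be the main obstacle, because it depends on the concrete $(d-1)$-regular graph chosen on $H^2_{ij}$. If the arbitrary Havel--Hakimi realisation does not suffice, I would first try to fix a specific realisation, for instance a circulant, where the distance-$2$ sets can be computed explicitly.

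Finally, build the cover. Let $T$ be the set of edges with $|S\cap P_{ij}|\ge 2$, and define $C=\{v_i : u_i\in S\}\cup\{\text{one endpoint of } e : e\in T\}$. Since the $P_{ij}$ are disjoint and each contains at least one vertex of $S$, we get $|\{i : u_i\in S\}|+|T|\le |S|-m\le k$, so $|C|\le k$.

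To see that $C$ covers every edge, take $(v_i,v_j)\notin T$ and suppose $u_i\notin S$. The vertices at distance exactly $2$ from $u_i$ are the $u_l$ with $v_l$ a $G_1$-neighbour of $v_i$, and the $u^k_{il}$ for edges incident to $v_i$. Each gadget not in $T$ meets $S$ only in $u_{il}$, by the key claim. For gadgets in $T$, an endpoint is already added to $C$. Hence $u_i$ must be dominated by some $u_l\in S$. I would refine this step so that $u_i$'s domination by $u^k_{il}$ is charged to $T$, giving that every neighbour of an uncovered $v_i$ is in $C$, and hence $C$ is a vertex cover of size at most $k$.
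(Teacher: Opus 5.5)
The genuine gap is in the covering step of the reverse direction. You try to extract the cover from the requirement that $u_i$ be hop dominated, but $u_i$ needs only one dominator. From $u_i\notin S$ you can conclude only that $u_l\in S$ for \emph{some} $G_1$-neighbour $v_l$ of $v_i$, or that $u^k_{il}\in S$ for some edge $(v_i,v_l)\in T$. Either way this covers $(v_i,v_l)$, not the given edge $(v_i,v_j)$. The announced refinement (``every neighbour of an uncovered $v_i$ is in $C$'') does not follow from anything you have shown.

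The vertex that actually forces the cover lies in $H^1_{ij}$. The vertices at distance exactly $2$ from $u^1_{ij}$ are $u_i$, $u_j$ (via $u_{ij}$) and certain vertices of $P_{ij}\setminus\{u_{ij}\}$; $u_{ij}$ itself is adjacent to $u^1_{ij}$. So suppose $(v_i,v_j)\notin T$. Then $S\cap P_{ij}=\{u_{ij}\}$ by your key claim, so $u^1_{ij}\notin S$, and its only possible dominators in $S$ are $u_i$ and $u_j$. Hence $v_i\in C$ or $v_j\in C$. With this witness, your accounting via $T$ completes the proof; it is a cleaner version of the paper's normalisation, which swaps a gadget's dominator for $u_{ij}$. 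The paper's own sketch ends with the same kind of appeal to dominating $H^0_{ij}$ and has the same defect: the witness has to come from $H^1_{ij}$.

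The same observation settles the case you left open in the forward direction. $u^k_{ij}$ is at distance exactly $2$ from $u_i$ and $u_j$, and one of them is in $S$ because $C$ is a vertex cover. The edges of $H^2_{ij}$ play no role, and your first attempt via $u^{k'}_{ij}$ cannot work since $u^{k'}_{ij}\notin S$.

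Also, the key claim is not an obstacle and does not depend on which $(d-1)$-regular graph is placed on $H^2_{ij}$. Every vertex within distance $2$ of a vertex of $H^2_{ij}$ lies in $P_{ij}$. If the lone vertex of $S$ in $P_{ij}$ is $u^k_{ij}$, its child $u^{k1}_{ij}$ is adjacent to it and has no other possible dominator. If the lone vertex is $x\in H^2_{ij}$, use an $H^2$-neighbour of $x$ (one exists since $d-1\ge 3$). Do not use the parent of $x$ in $H^1_{ij}$, which could be dominated by $u_i$ or $u_j$.
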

       \begin{proof}
             Let us assume $S_{G_1}$ is a vertex cover of size at most $k$. Now we take $S_{G_2} =S_{G_1} \cup_{(v_i,v_j)\in E(G_1)} \{u_{ij}\}$, which is a hop dominating set. Conversely, let $G_2$ have a hop dominating set $S_{G_2}$ of size at most $k+m$. Let $x\in H^2_{ij}$ hop dominated by a vertex $y$, now $y$ belong to $H_{ij}^2$ or $y=u_{ij}$, notice that there exist at least one vertex in $H_{ij}^1$ and one vertex in $H_{ij}^2$ can not be hop dominated by this $y$. Hence, 
             if $y\in H_{ij}^2$, we can replace it with by $u_{ij}$. Now to hop dominate any vertex $\in H_{ij}^0$, either $u_i$ or $u_j$ must be in $S_{G_2}$. Notice that, the set $\{v_l:u_l\in S_{G_2}$\} is a vertex cover of $G_1$ of size $\leq k$. \hfill $\square$
             
       \end{proof}

Hence the \textsc{Hop Domination} problem is NP-complete for $d$-regular graphs, where $d\geq4$. 
       
       \subsection{$2$-Step Domination}
In this subsection, we show that \textsc{$2$-Step Domination} problem is NP-complete in $d$-regular graphs, where $d\geq4$. It is known that \textsc{Vertex Cover} problem is NP-complete for $d$-regular graphs~\cite{garey2002computers}.  We reduce this problem to our problem.
\begin{figure}[h]
    \centering
   \includegraphics[width=0.85\textwidth]{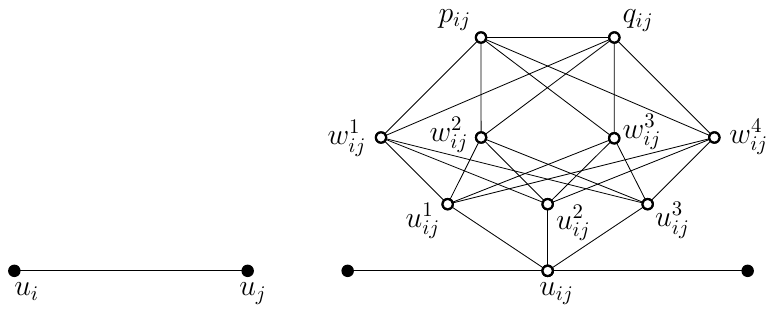}
    \caption{Gadget construction of $G_2$ from $G_1$, for example $5$-regular graph}
    \label{}
\end{figure}
Given a $d\geq 4$-regular graph $G_1 = (V(G_1), E(G_1))$ with vertex set $V(G_1) = \{v_1, v_2, \dots, v_n\}$, we construct a new graph $G_2$ as follows:
\begin{enumerate}
    \item First, we take the set of vertex $\{u_1, u_2, \dots, u_n\}$, where each $u_i \in V(G_2)$ corresponds to the vertex $v_i \in V(G_1)$.
       \item For each edge $(v_i,v_j)\in E(G_1)$, where $i < j$, we add a new vertex $u_{ij}$ and join with $u_i$ and $u_j$.
       \item For each $u_{ij}$, we introduce $(d-2)$ new vertices $u^1_{ij},u^2_{ij},\dots,u^{d-2}
       _{ij}$ and join with $u_{ij}$. Now, we add $(d-1)$ new vertices $w^{1}_{ij},w^{2}_{ij},\dots,w^{(d-1)}
       _{ij}$ and  $w^k_{ij}$ adjacent to $u^l_{ij}$ for each $k\in\{1,2,\dots,(d-1)\}$ and $l\in \{1,2,\dots,(d-2)\}$.
       \item For each $u_{ij}$, we introduce another two new vertices $p_{ij},q_{ij}$ and the edges $(p_{ij},w^k_{ij}),$ $(q_{ij},w^k_{ij}),(p_{ij},q_{ij})$ for each $k\in \{1,2,\dots,$ $(d-1)\}$.
        
       We denote the induced the $H_{ij}=G_2[\{u_i,u_j,u_{ij}\} \cup \{w_{ij}^k,u_{ij}^{l}:k=1,2\dots,d-1, l=1,2,\dots,d-2\}]$, $H_{ij}^0=G_2[\{u_i,u_j,u_{ij}\}]$, $H_{ij}^1=G_2[\{u_{ij}^{l}:l=1,2,\dots,d-2\}]$, $H_{ij}^2=G_2[\{w_{ij}^{k}:l=1,2,\dots,d-1\}]$, $H_{ij}^3=G_2[\{p_{ij},q_{ij}\}]$, therefore $H_{ij}=H_{ij}^0\cup H^1_{ij}\cup H^2_{ij}\cup H_{ij}^3$ .
      
\end{enumerate}
 \begin{lemma}\label{lemma:d regular hop}
           $G_1$ has a vertex cover of size at most $k$ if and only if $G_2$ has a $2$-step dominating set of size at most $k+2m$, where $m=|E(G_1)|$.
       \end{lemma}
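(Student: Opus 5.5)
We prove the two directions separately, after first recording the distance-$2$ neighbourhoods inside one edge gadget $H_{ij}$. Each $u^l_{ij}$ is adjacent to $u_{ij}$ and to all $w^k_{ij}$. Each $w^k_{ij}$ is adjacent to all $u^l_{ij}$ and to $p_{ij},q_{ij}$. Moreover $p_{ij}q_{ij}$ is an edge. A quick check shows every vertex has degree $d$, so $G_2$ is $d$-regular. The distance-$2$ neighbourhoods are then:
\begin{itemize}
\item $N(p_{ij},2)=N(q_{ij},2)=\{u^l_{ij}\}_l$;
\item $N(w^k_{ij},2)=\{u_{ij}\}\cup\{w^{k'}_{ij}\}_{k'\neq k}$;
\item $N(u^l_{ij},2)=\{u_i,u_j,p_{ij},q_{ij}\}\cup\{u^{l'}_{ij}\}_{l'\neq l}$;
\item $N(u_{ij},2)$ consists of the $w^k_{ij}$ together with all $u_{ik'},u_{jk'}$ for the other edges at $v_i$ and $v_j$;
\item $N(u_i,2)$ consists of the other endpoints $u_{k'}$ of edges $v_iv_{k'}$ together with all $u^l_{ik'}$.
\end{itemize}
Since $2$-step domination is symmetric in the distance relation, these lists tell us exactly which vertices can dominate which.

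For the forward direction, given a vertex cover $S_{G_1}$ of size at most $k$, we take $S_{G_2}=S_{G_1}\cup\bigcup_{(v_i,v_j)\in E(G_1)}\{u_{ij},u^1_{ij}\}$, which has size at most $k+2m$. The domination is checked as follows.
\begin{itemize}
\item $u^1_{ij}$ dominates $p_{ij}$, $q_{ij}$, $u_i$, $u_j$ and all $u^l_{ij}$ with $l\ge 2$.
\item $u_{ij}$ dominates every $w^k_{ij}$.
\item $u^1_{ij}$ itself is dominated by whichever endpoint $u_i$ or $u_j$ lies in the cover.
\item $u_{ij}$ is dominated by $u_{ik'}$ for some other edge at $v_i$, which exists because $d\ge 4$ and all edge-vertices $u_{ik'}$ are in $S_{G_2}$.
\end{itemize}
Thus $S_{G_2}$ is a $2$-step dominating set.

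For the converse, let $S_{G_2}$ be a $2$-step dominating set of size at most $k+2m$. From the list above, $p_{ij}$ can only be dominated by some $u^l_{ij}$, and $w^1_{ij}$ only by a vertex of $\{u_{ij}\}\cup\{w^{k}_{ij}\}_{k\ge2}$. These two sets are disjoint and are private to the gadget, so every gadget contributes at least $2$ vertices of $S_{G_2}$. We then normalize $S_{G_2}$ gadget by gadget:
\begin{itemize}
\item If a gadget contains at least $3$ vertices of $S_{G_2}$, we replace its vertices by $\{u_{ij},u^1_{ij}\}$ and add $u_i$. This does not increase the size.
\item If a gadget contains exactly $2$ vertices, one of them is some $u^l_{ij}$, and we swap it for $u^1_{ij}$ and the other for $u_{ij}$. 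Here the chosen $u^l_{ij}$ must itself be dominated, and its only possible dominators are $u_i$, $u_j$, $p_{ij}$, $q_{ij}$ or another $u^{l'}_{ij}$. The last three would be a third gadget vertex, so $u_i$ or $u_j$ already lies in $S_{G_2}$.
\end{itemize}

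After the normalization every edge $v_iv_j$ has $u_i$ or $u_j$ in the set. Every gadget holds exactly $\{u_{ij},u^1_{ij}\}$, and the remaining at most $k$ vertices are original vertices $u_l$. Hence $\{v_l:u_l\in S_{G_2}\}$ is a vertex cover of size at most $k$. The main obstacle is this normalization step, in particular justifying in the exactly-$2$ case that the self-domination of the chosen $u^l_{ij}$ forces an endpoint into $S_{G_2}$. The distance-$2$ lists above are precisely what make that argument go through.
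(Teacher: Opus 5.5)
Your proposal is correct and follows the paper's own proof: you use the same witness set $S_{G_1}\cup\bigcup\{u_{ij},u^1_{ij}\}$, the same lower bound of two solution vertices per gadget (one forced by $p_{ij}$, one forced by the $w^k_{ij}$), and the same exchange toward $\{u_{ij},u^1_{ij}\}$ that forces $u_i$ or $u_j$ into the set. Your split into the ``at least three'' and ``exactly two'' cases supplies a justification the paper omits: the paper asserts that dominating $u^1_{ij}$ forces $u_i$ or $u_j$ into $S_{G_2}$ without excluding the dominators $p_{ij}$, $q_{ij}$, $u^{l'}_{ij}$, and your counting handles exactly that.
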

       \begin{proof}
            Let us assume $S_{G_1}$ is a vertex cover of size at most $k$. Now we take $S_{G_2} =S_{G_1} \cup_{(v_i,v_j)\in E(G_1)} \{u_{ij},u_{ij}^1\}$, which is a $2$-step dominating set. Conversely, let $G_2$ have a $2$-step dominating set $S_{G_2}$ of size at most $k+2m$. Now to hop dominate $p_{ij}$ and $q_{ij}$ at least one vertices of $H_{ij}^1$ must be belongs to $S_{G_2}$. If $S_{G_2}\cap H_{ij}^1\geq 2$, then can update $S_{G_2}$ by $S_{G_2}\cup\{u_{ij}^1\}\setminus H_{ij}^1$. Now to hop dominate the vertices of $H_{ij}^2$ there exist a vertex $p\in H_{ij}^2$ or $p=u_{ij}$ must be in $S_{G_2}$. We can update $S_{G_2}$ by $S_{G_2}\cup\{u_{ij}\}\setminus H_{ij}^2$. Now to hop dominate $u_{ij}^1$ at least one of $u_i$ or $u_j$ must be in $S_{G_2}$. Notice that, the set $\{v_l:u_l\in S_{G_2}$\} is a vertex cover of $G_1$ of size $\leq k$. 
            \end{proof}
     Hence the \textsc{$2$-Step Domination} problem is NP-complete for $d$-regular graphs, where $d\geq4$.

       \section{Claw-free graph} \label{sec:clawfree}
       In this section, we prove that \textsc{Hop Domination} problem and \textsc{$2$-Step Domination} problem are NP-complete for claw-free graphs.
       \subsection{Hop Domination}
       \begin{figure}[h]
    \centering
   \includegraphics[width=0.9\textwidth]{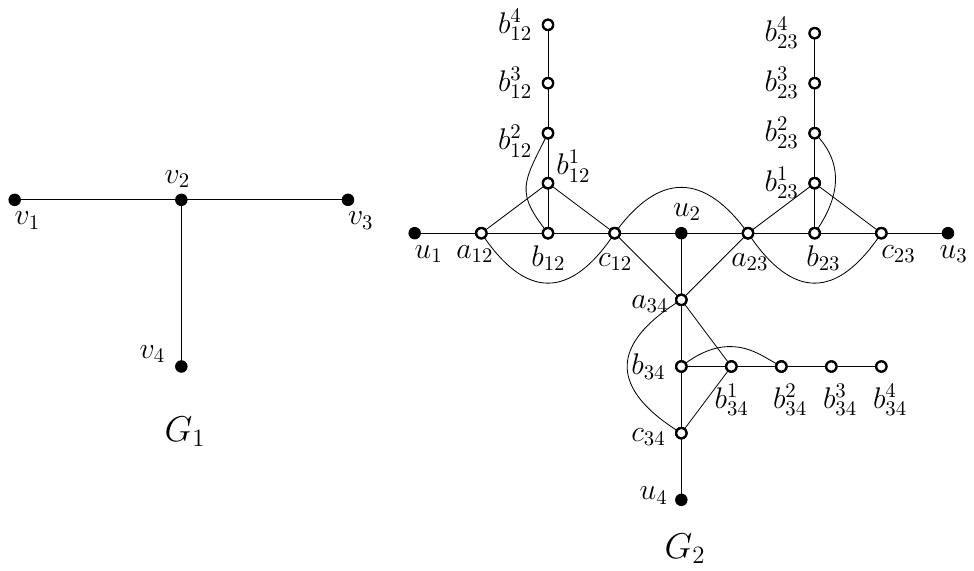}
    \caption{Gadget construction of $G_2$ from $G_1$}
    \label{}
\end{figure}
In this subsection, we show that \textsc{Hop Domination} problem in a claw-free graph is NP-complete, using the fact that the vertex cover problem \textsc{Np-Complete}\cite{alimonti2000some}. We reduce vertex cover problem to our problem.

       Consider any  graph $G_1$ with vertex set $V(G_1) = \{v_1,v_2,\dots,v_n\}$ and edge set $E(G_1)$.
We construct a new graph $G_2$ in the following way:
\begin{enumerate}
  \item  We add a new vertex set $\{ u_1,u_2,\dots,$ $u_n\}$ in $G_2$, where each $u_i$ corresponds to the vertex $v_i\in V(G_1)$. 
 \item For each edge $(v_i,v_j) \in E(G_1)$ with $i<j$, we draw a four length path $u_ia_{ij}b_{ij}c_{ij}u_j$ in $G_2$.  \item  In each $b_{ij}$ we draw a four length path $b_{ij}b^1_{ij}b^2_{ij}b^3_{ij}b^4_{ij}$. 
 \item Then we draw an edge between $b_{ij}$ and $b^2_{ij}$.
 \item For each edge $(v_i,v_j) \in E(G_1)$ with $i<j$,  we create a clique of size four with vertices $a_{ij},b_{ij},c_{ij},b^1_{ij}$. \item  Finally, for each $u_i\in G_2$, we form a clique with the vertex set is $N[u_i]$.
\end{enumerate}
\begin{lemma}
    
 $G_1$ has a vertex cover of size at most $k$ if and only if $G_2$ has a hop
dominating set of size at most $k + 2m$, where $m=|E(G_1)|$.
\end{lemma}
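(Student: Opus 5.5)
The plan is the usual two directions, driven by a local analysis of the tail $b_{ij}b^1_{ij}b^2_{ij}b^3_{ij}b^4_{ij}$ of each edge gadget.

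The structural facts I would use are these. $b^4_{ij}$ is pendant at $b^3_{ij}$, so the only vertex at distance exactly $2$ from $b^4_{ij}$ is $b^2_{ij}$. The vertices at distance exactly $2$ from $b^3_{ij}$ are $b^1_{ij}$ and $b_{ij}$, since $b_{ij}b^2_{ij}$ is an edge. Hence every hop dominating set meets both $\{b^2_{ij},b^4_{ij}\}$ and $\{b_{ij},b^1_{ij},b^3_{ij}\}$. These sets are disjoint within a gadget and disjoint across gadgets, which gives the lower bound of $2m$ gadget vertices.

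For the forward direction, given a vertex cover $S_{G_1}$, take $S_{G_2}=\{u_l : v_l\in S_{G_1}\}\cup\{b_{ij},b^2_{ij} : (v_i,v_j)\in E(G_1)\}$. The coverage checks are as follows.
\begin{itemize}
\item $b^2_{ij}$ hop dominates $a_{ij}$, $c_{ij}$ and $b^4_{ij}$.
\item $b_{ij}$ hop dominates $b^3_{ij}$ and both endpoints $u_i,u_j$, via $a_{ij}$ or $c_{ij}$. So every $u_l$ with $v_l$ non-isolated is covered; isolated vertices of $G_1$ can be discarded beforehand.
\item $b^1_{ij}$ lies at distance $2$ only from $u_i$, $u_j$ and $b^3_{ij}$. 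It is therefore covered by the cover vertex of $\{v_i,v_j\}$.
\end{itemize}
Before relying on these distances, I would check that the cliques on $N[u_i]$ do not shorten any of them.

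For the converse, the intended route is an exchange argument. Normalize each gadget so that $S_{G_2}$ contains exactly $\{b_{ij},b^2_{ij}\}$ there, and pay for any surplus gadget vertex by inserting $u_i$ or $u_j$. Then every edge must have an endpoint in $S_{G_2}$, because $b^1_{ij}$ is otherwise undominated, and $\{v_l : u_l\in S_{G_2}\}$ is a vertex cover of size at most $k$.

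The main obstacle is the gadgets that hold exactly two vertices and dominate $b^1_{ij}$ internally, namely via $b^3_{ij}$. The choice $\{b^2_{ij},b^3_{ij}\}$ hop dominates the entire tail together with $a_{ij}$ and $c_{ij}$ with no help from $u_i$ or $u_j$. The endpoints $u_i,u_j$ then only need to be hop dominated by some $b$ or $c$ of another incident gadget.

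Checking this on a triangle $v_kv_xv_y$ suggests the statement fails as written. Take $u_k$, the sets $\{b,b^2\}$ on gadgets $kx$ and $ky$, and $\{b^2,b^3\}$ on gadget $xy$. This appears to give a hop dominating set of size $1+2\cdot 3=7=k+2m$ with $k=1$, although the triangle has no vertex cover of size $1$.

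So I would first verify this example carefully, including the extra adjacencies from the cliques on $N[u_i]$. If it holds, I would repair the construction rather than the proof. One option is to make $b^1_{ij}$ unreachable at distance $2$ from $b^3_{ij}$, for instance by removing the chord $b_{ij}b^2_{ij}$ or lengthening the tail, so that the forced second vertex can only cover $b^1_{ij}$ through $u_i$ or $u_j$. Claw-freeness of the modified graph would then need to be rechecked. After that, the exchange argument above should go through.
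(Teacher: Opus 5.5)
Your counterexample is correct, so the lemma is false as stated. For the triangle $v_1v_2v_3$, the cliques on $N[u_l]$ add only the edges $a_{12}a_{13}$, $c_{12}a_{23}$ and $c_{13}c_{23}$. Take $S=\{u_1,b_{12},b^2_{12},b_{13},b^2_{13},b^2_{23},b^3_{23}\}$. Then:
$b_{12}$ is at distance $2$ from $u_2$ and $b^3_{12}$, and $b_{13}$ from $u_3$ and $b^3_{13}$;
$u_1$ is at distance $2$ from $b^1_{12}$ and $b^1_{13}$;
$b^3_{23}$ is at distance $2$ from $b_{23}$ and $b^1_{23}$;
each $b^2_{ij}$ is at distance $2$ from $a_{ij}$, $c_{ij}$ and $b^4_{ij}$.
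So $G_2$ has a hop dominating set of size $7=1+2m$, although the triangle has no vertex cover of size $1$.

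The paper's converse fails exactly where you located the problem. It says that to hop dominate $b^3_{ij}$ one of $b_{ij},b^1_{ij}$ must lie in $S_{G_2}$, and then assumes $b^1_{ij}\in S_{G_2}$ without loss of generality. This overlooks $b^3_{ij}\in S_{G_2}$. For a gadget holding $\{b^2_{ij},b^3_{ij}\}$, the later claim that $u_i$ or $u_j$ is needed to hop dominate $b_{ij}$ is false. The two forward directions differ harmlessly: the paper uses $\{b^1_{ij},b^2_{ij}\}$ and you use $\{b_{ij},b^2_{ij}\}$. Both need $G_1$ to have no isolated vertices, which the paper never states.

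Two corrections to the rest of your proposal. First, $b^1_{ij}$ is at distance $2$ not only from $u_i,u_j,b^3_{ij}$ but from every other vertex of $N(u_i)\cup N(u_j)$, because of the cliques on $N[u_l]$. This does not affect your upper bound, but any converse must account for it. Second, removing the chord $b_{ij}b^2_{ij}$ does not achieve your stated aim. The path $b^1_{ij}b^2_{ij}b^3_{ij}$ stays induced, so $b^3_{ij}$ still hop dominates $b^1_{ij}$. Without the chord, $b^2_{ij}$ also reaches $b_{ij}$, so $\{b^1_{ij},b^2_{ij}\}$ hop dominates its whole gadget and both endpoints. The triangle would then have a hop dominating set of size $2m$.

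The underlying defect is that your upper-bound construction works from any dominating set $D$ of $G_1$, not only from a vertex cover. Take $\{u_l: v_l\in D\}$, put $\{b_{ij},b^2_{ij}\}$ on one edge joining each $v_i\notin D$ to $D$, and put $\{b^2_{ij},b^3_{ij}\}$ on all other edges. This gives a hop dominating set of size $|D|+2m$.

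Conversely, build a set in $G_1$ from a hop dominating set $S_{G_2}$ as follows. Map each $u_l\in S_{G_2}$ to $v_l$. An attachment vertex $a_{ij}$ or $c_{ij}$ in $S_{G_2}$ is necessarily a third vertex of its gadget, since the two forced sets $\{b^2_{ij},b^4_{ij}\}$ and $\{b_{ij},b^1_{ij},b^3_{ij}\}$ avoid it; map it to the endpoint it is attached to. For any other gadget with at least three vertices of $S_{G_2}$, take an arbitrary endpoint. This gives a dominating set of $G_1$ of size at most $|S_{G_2}|-2m$. Hence $\gamma_h(G_2)=2m+\gamma(G_1)$.

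So the lemma fails whenever the domination number of $G_1$ is smaller than its vertex cover number. The simplest repair of the claw-free result is to keep this gadget unchanged and reduce from \textsc{Dominating Set} instead of \textsc{Vertex Cover}.
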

\begin{proof}

Suppose that, $G_1$ has a vertex cover, $S_{G_1}$ of size at most $k$. We consider the set $S_{G_2}=S_{G_1} \cup  \bigcup \{b^1_{ij},b^2_{ij}\}$, which is a hop dominating set.
Conversely, let $G_2$ has a hop dominating set, $S_{G_2}$ , of size at most $ k + 2m $.
If $| S_{G_2}\cap \{b_{ij},b^1_{ij},b^2_{ij},b^3_{ij}b^4_{ij}\}| \leq 1$ for some edge $e_{ij} \in E(G)$, then $b^3_{ij}$ or $b^4_{ij}$ 
is not hop dominated by $S_{G_2}$, a contradiction. Therefore, $|S_{G_2}|\cap \{b_{ij}b^1_{ij}b^2_{ij}b^3_{ij}b^4_{ij}\} | \geq 2$ for every edge $e_{ij} \in E(G_2)$. Now to hop dominate $b_{ij}^4$, $b_{ij}^2$ must be in  $S_{G_2}$. Similarly, to hop dominate $b_{ij}^3$, $b_{ij}^1$ or $b_{ij}$  must be in  $S_{G_2}$, w.l.o.g let $b_{ij}^1 \in S_{G_2}$. Now if any vertex $x\in N(u_i)$ belongs to $S_{G_2}$, $x$ can replaced by $u_i$. Now to hop dominate $b_{ij}$, $u_i$ or $u_j$ must be in $S_{G_2}$. Notice that, the set $\{v_l:u_l\in S_{G_2}$\} is a vertex cover of $G_1$ of size $\leq k$.
\end{proof}
Hence the \textsc{Hop Domination} problem is NP-complete for claw-free graphs.
\subsection{$2$-Step Domination}
\begin{figure}[h]
    \centering
   \includegraphics[width=0.85\textwidth]{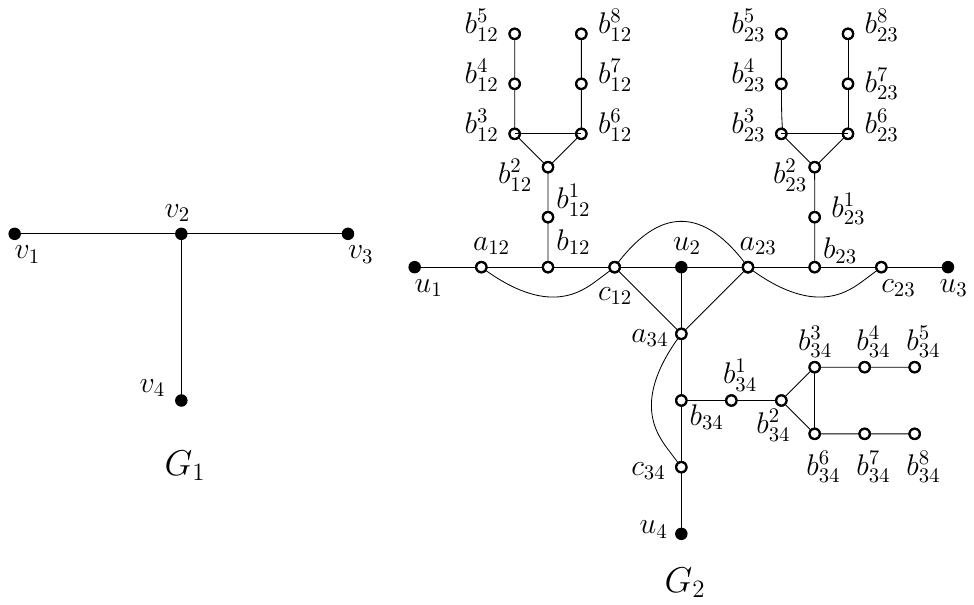}
    \caption{Gadget construction of $G_2$ from $G_1$}
    \label{}
\end{figure}
In this subsection, we show that \textsc{$2$-Step Domination} problem in a claw-free graph is NP-complete, using the fact that the vertex cover problem \textsc{Np-Complete}\cite{alimonti2000some}. We reduce vertex cover problem to our problem.

 Consider any  graph $G_1$ with vertex set $V(G_1) = \{v_1,v_2,\dots,v_n\}$ and edge set $E(G_1)$.
We construct a new graph $G_2$ in the following way:
 \begin{enumerate}
   
 \item  We add a new vertex set $\{ u_1,u_2,\dots,$ $u_n\}$ in $G_2$, where each $u_i$ corresponds to the vertex $v_i\in V(G_1)$. 
 \item  For each edge $(v_i,v_j) \in E(G_1)$ with $i<j$, we draw a four length path $u_ia_{ij}b_{ij}c_{ij}u_j$ in $G_2$.  \item  In each $b_{ij}$ we draw a two length path $b_{ij}b^1_{ij}b^2_{ij}$. 
 \item For each $b_{ij}^2$, we introduce six new vertices $b_{ij}^3,b_{ij}^4,b_{ij}^5,b_{ij}^6,b_{ij}^7,b_{ij}^8$, and we draw the edges $(b_{ij}^3,b_{ij}^4),(b_{ij}^4,b_{ij}^5),(b_{ij}^3,b_{ij}^6),(b_{ij}^6,b_{ij}^7),(b_{ij}^7,b_{ij}^8),(b_{ij}^2,b_{ij}^3),(b_{ij}^2,b_{ij}^6)$.
 \item Finally, for each $u_i\in G_2$, we form a clique with the vertex set is $N[u_i]$. 
 \end{enumerate}
\begin{lemma}
$G_1$ has a vertex cover of size at most $k$ if and only if $G_2$ has a $2$-step
dominating set of size at most $k + 4m$, where $m=|E(G_1)|$.
\end{lemma}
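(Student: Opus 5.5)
The plan is a two-way reduction mirroring the earlier lemmas. First I make explicit the distance-$2$ neighbourhoods in one edge gadget $\{b_{ij},b^1_{ij},\dots,b^8_{ij}\}$ together with the clique structure at the vertices $u_i$. Since every $u_i$ is made into a clique with $N[u_i]$, the vertices of $N[u_i]$ are pairwise adjacent. Hence $u_i$ is at distance exactly $2$ only from the vertices $b_{ij}$ of its incident edges. Each $a_{ij}$ (or $c_{ij}$) is at distance $2$ from $b^1_{ij}$, from the opposite end $c_{ij}$ (or $a_{ij}$), and from the $b$-vertices of the other edges at $u_i$. Inside the gadget the relevant sets are:
\begin{itemize}
\item $b^5_{ij}$: its only distance-$2$ vertex is $b^3_{ij}$, and likewise $b^8_{ij}$ only has $b^6_{ij}$;
\item $b^3_{ij}$: distance $2$ from $b^1_{ij},b^5_{ij},b^7_{ij}$;
\item $b^6_{ij}$: distance $2$ from $b^1_{ij},b^4_{ij},b^8_{ij}$;
\item $b^2_{ij}$: distance $2$ from $b_{ij},b^4_{ij},b^7_{ij}$;
\item $b_{ij}$: distance $2$ from $u_i,u_j,b^2_{ij}$ and the other members of $N(u_i)\cup N(u_j)$.
\end{itemize}

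For the forward direction, given a vertex cover $S_{G_1}$ of size at most $k$, I take
\[
S_{G_2}=\{u_l : v_l\in S_{G_1}\}\cup\bigcup_{(v_i,v_j)\in E(G_1)}\{b_{ij},b^1_{ij},b^3_{ij},b^6_{ij}\},
\]
which has size at most $k+4m$. I would then verify each vertex class in turn.
\begin{itemize}
\item $b^3_{ij}$ covers $b^1_{ij},b^5_{ij},b^7_{ij}$, and $b^6_{ij}$ covers $b^1_{ij},b^4_{ij},b^8_{ij}$.
\item $b^1_{ij}$ covers $b^3_{ij},b^6_{ij},a_{ij},c_{ij}$, and $b_{ij}$ covers $b^2_{ij},u_i,u_j$.
\item $b_{ij}$ itself is covered by whichever endpoint $u_i$ or $u_j$ lies in the cover; this is the only place the vertex cover property is used.
\end{itemize}

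For the converse, let $S_{G_2}$ have size at most $k+4m$.
\begin{itemize}
\item Step 1: the leaves force $b^3_{ij},b^6_{ij}\in S_{G_2}$ for every edge.
\item Step 2: I show every gadget contains at least $4$ vertices of $S_{G_2}$. $b^3_{ij}$ needs a dominator in $\{b^1,b^5,b^7\}$ and $b^6_{ij}$ in $\{b^1,b^4,b^8\}$. If $b^1_{ij}\notin S_{G_2}$, these require two further vertices. If $b^1_{ij}\in S_{G_2}$, then $b^2_{ij}$, which is adjacent to $b^1_{ij}$, still needs one of $b_{ij},b^4_{ij},b^7_{ij}$.
\item Step 3: the gadgets already use at least $4m$ vertices, so at most $k$ vertices of $S_{G_2}$ lie outside the gadgets, i.e.\ among the $u$'s, $a$'s and $c$'s.
\item Step 4: normalise. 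Replace any $a_{ij}$ or $c_{ij}$ in $S_{G_2}$ by the corresponding $u_i$ or $u_j$. Also, if some gadget has $5$ or more vertices, or covers $b_{ij}$ only via $b^2_{ij}$, trade the surplus for $u_i$; in that case the gadget can be reset to $\{b_{ij},b^1_{ij},b^3_{ij},b^6_{ij}\}$.
\item Step 5: after normalisation each gadget has exactly $4$ vertices and none of them covers $b_{ij}$. Therefore $b_{ij}$ is dominated by $u_i$ or $u_j$, so $\{v_l : u_l\in S_{G_2}\}$ is a vertex cover of size at most $k$.
\end{itemize}

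The main obstacle is the exchange argument in Step 4. I must check that replacing $a_{ij}$ by $u_i$ loses nothing. The vertex $a_{ij}$ dominates the $b$'s at $u_i$, which $u_i$ also dominates, plus $b^1_{ij}$ and $c_{ij}$, which $b^3_{ij}$ and $b^1_{ij}$ already handle. I must also check that the replacement itself is dominated, since $u_i$ is dominated by any $b_{ij}$ in $S_{G_2}$. Finally, the accounting for gadgets that spend their fourth vertex on $b^4$ or $b^7$ instead of $b_{ij}$ needs care: such gadgets leave $u_i,u_j$ possibly undominated, so the charging of these cases must be checked explicitly.
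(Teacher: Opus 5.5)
Your plan follows the paper's proof step for step. It uses the same forward set and the same forcing of $b^3_{ij},b^6_{ij}$ by the leaves. It gives the same lower bound of $4$ per gadget. It applies the same normalisation: reset each gadget to $\{b_{ij},b^1_{ij},b^3_{ij},b^6_{ij}\}$, then push neighbours of $u_i$ onto $u_i$. Your distance-$2$ tables and your forward direction are correct.

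\textbf{The gap.} The gap is the step you leave open in Step~4, which the paper also only asserts. You let a gadget that dominates $b_{ij}$ only through $b^2_{ij}$ ``trade the surplus'' for $u_i$, without knowing it has a surplus. The missing fact is that $b^2_{ij}$ lies in none of the three sets $\{b^1_{ij},b^5_{ij},b^7_{ij}\}$, $\{b^1_{ij},b^4_{ij},b^8_{ij}\}$, $\{b_{ij},b^4_{ij},b^7_{ij}\}$. Your Step~2 shows each of these must be hit inside the gadget, and no single vertex hits all three. So $b^2_{ij}\in S_{G_2}$ forces at least $5$ gadget vertices.

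\textbf{The fix.} With this fact, no exchange is needed. Let $X_e=S_{G_2}\cap\{b_e,b^1_e,\dots,b^8_e\}$ and $Y=S_{G_2}\setminus\bigcup_e X_e$. Send $u_l\in Y$, and every member of $N(u_l)$ in $Y$, to $v_l$. Add one endpoint of each edge $e$ with $b^2_e\in S_{G_2}$. Now $b_e$ is dominated by $S_{G_2}$, and its distance-$2$ set is $\bigl((N[u_i]\cup N[u_j])\setminus\{a_e,c_e\}\bigr)\cup\{b^2_e\}$. Hence this is a vertex cover, of size at most $|Y|+\sum_e(|X_e|-4)=|S_{G_2}|-4m\le k$.

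\textbf{Your final worry.} The concern that gadgets whose fourth vertex is $b^4_{ij}$ or $b^7_{ij}$ leave $u_i,u_j$ undominated is a red herring. You never need the modified set to remain $2$-step dominating; you only need that $b_{ij}$ was dominated in the original set.

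If you prefer the exchange version, reset \emph{every} gadget, including those of size exactly $4$, before touching the $a$'s and $c$'s. This is safe for two reasons. Only $b_{ij}$ and $b^1_{ij}$ have distance-$2$ neighbours outside their gadget. A $4$-vertex gadget cannot contain $b^2_{ij}$, so its $b_{ij}$ was dominated from outside. The order matters. Your claim that $c_{ij}$ stays dominated via $b^1_{ij}$ after replacing $a_{ij}$ by $u_i$ fails for the valid gadget $\{b^3_{ij},b^4_{ij},b^5_{ij},b^6_{ij}\}$ if you replace before resetting.

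(The forward direction also tacitly needs $G_1$ to have no isolated vertices, since an isolated $u_l$ has nothing at distance $2$. This is harmless for cubic $G_1$.)
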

\begin{proof}
    
Suppose that $G_1$ has a vertex cover, $S_{G_1}$ of size at most $k$. We consider the set $S_{G_2}=S_{G_1} \cup  \bigcup \{b^3_{ij},b^6_{ij},b_{ij},b_{ij}^1\}$, which is a $2$-step dominating set.
Conversely, let $G_2$ have a $2$-step dominating set, $S_{G_2}$ of size at most $ k + 4m $. $| S_{G_2}\cap \{b_{ij},b^1_{ij},b^2_{ij},b^3_{ij},b^4_{ij},b^6_{ij},b^7_{ij}\}| \geq 4$ for some edge $e_{ij} \in E(G)$. If possible let $| S_{G_2}\cap \{b_{ij},b^1_{ij},b^2_{ij},b^3_{ij},b^4_{ij},b^6_{ij},b^7_{ij}\}| \leq 3$ for some edge $e_{ij} \in E(G)$. Now to hop dominate $b_{ij}^5$ and $b_{ij}^8$  vertices $b_{ij}^3$ and $b_{ij}^6$ must be in $S_{G_2}$. Now to hop dominate $b_{ij}^3$ and $b_{ij}^6$ either $b_{ij}^1$ be in $S_{G_2}$ or one of $\{b_{ij}^5,b_{ij}^4\}$,$\{b_{ij}^5,b_{ij}^8\}$,$\{b_{ij}^6,b_{ij}^7\},\{b_{ij}^4,b_{ij}^7\}$ must be in $S_{G_2}$. Now if $ S_{G_2}\cap \{b_{ij},b^1_{ij},b^2_{ij},b^3_{ij},b^4_{ij},b^6_{ij},b^7_{ij}\}=\{b^3_{ij},b^6_{ij},b^1_{ij}\}$, then $b^2_{ij}$ can not hop dominated by $S_{G_2}$. Hence $| S_{G_2}\cap \{b_{ij},b^1_{ij},b^2_{ij},b^3_{ij},b^4_{ij},b^6_{ij},b^7_{ij}\}| \geq 4$. We can select $S_{G_2}$ in such a way that $ S_{G_2}\cap \{b_{ij},b^1_{ij},b^2_{ij},b^3_{ij},b^4_{ij},b^6_{ij},b^7_{ij}\}=\{b^3_{ij},b^6_{ij},b^1_{ij},b_{ij}\}$. Now if any vertex $x\in N(u_i)$ belongs to $S_{G_2}$, $x$ can replaced by $u_i$. Now to hop dominate $b_{ij}$, $u_i$ or $u_j$ must be in $S_{G_2}$. Notice that, the set $\{v_l:u_l\in S_{G_2}$\} is a vertex cover of $G_1$ of size $\leq k$. 
\end{proof}
Hence the \textsc{$2$-Step Domination} problem is NP-complete for claw-free graphs.

\section{Conclusion}\label{sec:conclusion}

In this paper, we studied the computational complexity of the \textsc{Hop Domination} and \textsc{$2$-Step Domination} problems. We proved that both problems are NP-complete on several important graph classes, including unit disk graphs, regular graphs ($d$-regular, for every $d\geq 3$), and claw-free graphs. These results strengthen the known complexity landscape of exact distance domination problems and demonstrate that the problems remain computationally intractable even under strong structural restrictions on the input graphs.

Several interesting questions remain open for future research. It would be worthwhile to study the parameterized and approximation complexity of these problems on restricted graph classes such as unit disk,  planar, regular and claw-free  graphs. Further study of kernelization and fixed-parameter tractability for these problems may also lead to deeper insights into the structure of exact distance domination problems.

 \bibliographystyle{abbrv}
\bibliography{ref}
\end{document}